\documentclass[lettersize,journal]{IEEEtran}
\usepackage{amsmath,amsfonts}
\usepackage[ruled,lined,linesnumbered,noend]{algorithm2e}
\usepackage{array}
\usepackage{textcomp}
\usepackage{stfloats}
\usepackage{url}
\usepackage{verbatim}
\usepackage{graphicx}
\usepackage{cite}
\usepackage{amsthm}
\usepackage{booktabs,tabularx}
\usepackage[normalem]{ulem}
\usepackage[table]{xcolor}
\usepackage{multirow}
\usepackage{subfigure}
\usepackage{tikz}
\usetikzlibrary{positioning, shapes, calc, arrows.meta, decorations.pathreplacing}
\usepackage{pgfplots}

\newtheorem{theorem}{Theorem}[section]

\newtheorem{definition}{Definition}

\begin{document}

\title{Top-k Approximate Functional Dependency Discovery}

\author{Xiaolong~Wan, Xixian~Han
\thanks{The authors are with School of Computer Science and Technology, Harbin Institute of Technology, China. (e-mail: wxl@hit.edu.cn, hanxx@hit.edu.cn)}
\thanks{}}

\markboth{}%
{Shell \MakeLowercase{\textit{et al.}}: A Sample Article Using IEEEtran.cls for IEEE Journals}


\maketitle

\begin{abstract}
Approximate functional dependencies (AFDs) relax exact functional dependencies by tolerating a bounded degree of violation, making them suited for data quality auditing. Threshold-based discovery returns all dependencies above a user-specified cutoff, but output size is uncontrollable, the right threshold varies across datasets, and widely used measures are sensitive to LHS dimensionality. We study global top-$k$ AFD discovery, where neither the LHS nor the RHS is fixed and the $k$ strongest dependencies under $\mu^+$ are returned directly. The cross-attribute comparability of $\mu^+$ makes such a global ranking well-defined. We prove a Triangle Incompatibility Theorem showing that minimality, global top-$k$ ranking, and exact-$k$ output cannot simultaneously hold under any non-monotonic scoring function, justifying the removal of the minimality requirement. We present two algorithms: TALE-Base, which returns the exact global top-$k$ result by exhaustive level-wise evaluation, and TALE-Opt, which reduces computation through Apriori-style candidate generation, LHS computation reuse, and two complementary pruning rules exploiting exact FD monotonicity and an optimistic upper bound on $\mu^+$. Experiments on 41 real-world datasets show that TALE-Opt achieves pruning ratios up to 99.81\% and speedups over TALE-Base up to 78.81$\times$.
\end{abstract}

\begin{IEEEkeywords}
	Approximate functional dependency, top-$k$ discovery, pruning strategy, data profiling.
\end{IEEEkeywords}

\section{Introduction} \label{sec:intro}

\IEEEPARstart{F}{unctional} dependencies (FDs) are among the most studied integrity constraints in relational databases~\cite{DBLP:books/mg/SKS20}. An FD $X \to A$ states that the values of $X$ uniquely determine $A$, and discovering FDs from data has long served as a basis for schema design~\cite{DBLP:journals/is/KohlerL18}, data cleaning~\cite{DBLP:journals/jdiq/BoecklingB25}, and consistency checking~\cite{DBLP:conf/sigmod/ZhangGR20}. In practice, strict FDs seldom hold over real-world datasets: dirty values, recording errors, and legitimate exceptions all introduce violations. Approximate functional dependencies (AFDs)~\cite{DBLP:journals/pvldb/0001N18} address this by tolerating a bounded degree of violation. For data quality auditing, an AFD that is \emph{almost} satisfied is often more useful than an exact FD, since it points directly to the tuples responsible for the violation and supports targeted error detection and repair~\cite{DBLP:books/acm/IlyasC19}.

\begin{figure}[t]
	\centering
	\begin{tikzpicture}[
		x=1cm, y=1cm,
		dot/.style={circle, inner sep=1.5pt, fill},
		dotgray/.style={circle, inner sep=1.5pt, fill=black!20},
		highlight/.style={circle, inner sep=1.5pt, fill, draw=red, line width=0.8pt},
		brace/.style={decorate, decoration={brace, amplitude=3pt}},
		]
		
		\def\axislen{5.5}
		\def\rowsep{2.2}
		
		\begin{scope}[shift={(0, 2*\rowsep)}]
			\node[font=\small\bfseries, anchor=west] at (0, 0.8) {(a) Threshold-based};
			
			\draw[->, thick] (0, 0) -- (\axislen+0.3, 0);
			\node[font=\scriptsize, anchor=north] at (0, -0.05) {0.5};
			\node[font=\scriptsize, anchor=north] at (\axislen, -0.05) {1};
			\node[font=\scriptsize, anchor=north west] at (\axislen+0.15, -0.05) {$\mathit{score}$};
			
			\node[dot, fill=blue!70] at (0.94*\axislen, 0) {};        
			\node[dot, fill=orange!80] at (0.88*\axislen, 0) {};       
			\node[dot, fill=blue!70] at (0.82*\axislen, 0) {};         
			\node[dot, fill=orange!80] at (0.76*\axislen, 0) {};       
			\node[dot, fill=blue!70] at (0.70*\axislen, 0) {};         
			\node[dot, fill=green!60!black] at (0.64*\axislen, 0) {};  
			\node[dot, fill=red!60] at (0.58*\axislen, 0) {};          
			\node[dot, fill=orange!80] at (0.52*\axislen, 0) {};       
			\node[dot, fill=green!60!black] at (0.44*\axislen, 0) {};  
			\node[dot, fill=purple!60] at (0.36*\axislen, 0) {};       
			
			\draw[dashed, thick, red!70!black] (0.80*\axislen, -0.35) -- (0.80*\axislen, 0.5);
			\node[font=\scriptsize, red!70!black, anchor=south west] at (0.80*\axislen+0.05, 0.4) {$\theta{=}0.90$};
			
			\node[font=\scriptsize, text width=5.5cm, anchor=north west] at (0, -0.45) {%
				Output size depends on $\theta$: $\theta{=}0.80 \Rightarrow 7$;\; $\theta{=}0.95 \Rightarrow 1$;\; $\theta{=}0.90 \Rightarrow 3$.};
		\end{scope}
		
		\begin{scope}[shift={(0, \rowsep)}]
			\node[font=\small\bfseries, anchor=west] at (0, 0.8) {(b) Fixed-RHS top-$k$ \normalfont\scriptsize (RHS $= D$)};
			
			\draw[->, thick] (0, 0) -- (\axislen+0.3, 0);
			\node[font=\scriptsize, anchor=north] at (0, -0.05) {0.5};
			\node[font=\scriptsize, anchor=north] at (\axislen, -0.05) {1};
			\node[font=\scriptsize, anchor=north west] at (\axislen+0.15, -0.05) {$\mathit{score}$};
			
			\node[dotgray] at (0.94*\axislen, 0) {};       
			\node[dotgray] at (0.82*\axislen, 0) {};        
			\node[dotgray] at (0.70*\axislen, 0) {};        
			\node[dotgray] at (0.64*\axislen, 0) {};        
			\node[dotgray] at (0.58*\axislen, 0) {};        
			\node[dotgray] at (0.44*\axislen, 0) {};        
			\node[dotgray] at (0.36*\axislen, 0) {};        
			
			\node[highlight, fill=orange!80] at (0.88*\axislen, 0) {};  
			\node[highlight, fill=orange!80] at (0.76*\axislen, 0) {};  
			\node[highlight, fill=orange!80] at (0.52*\axislen, 0) {};  
			
			\node[font=\scriptsize, text width=5.5cm, anchor=north west] at (0, -0.45) {%
				Only searches $X {\to} D$. Misses $\{B,C\}{\to}E$ ($score{=}0.97$).};
		\end{scope}
		
		\begin{scope}[shift={(0, 0)}]
			\node[font=\small\bfseries, anchor=west] at (0, 0.8) {(c) Global top-$k$ \normalfont\scriptsize (this paper)};
			
			\draw[->, thick] (0, 0) -- (\axislen+0.3, 0);
			\node[font=\scriptsize, anchor=north] at (0, -0.05) {0.5};
			\node[font=\scriptsize, anchor=north] at (\axislen, -0.05) {1};
			\node[font=\scriptsize, anchor=north west] at (\axislen+0.15, -0.05) {$\mathit{score}$};
			
			\node[dot, fill=orange!80] at (0.76*\axislen, 0) {};       
			\node[dot, fill=blue!70] at (0.70*\axislen, 0) {};         
			\node[dot, fill=green!60!black] at (0.64*\axislen, 0) {};  
			\node[dot, fill=red!60] at (0.58*\axislen, 0) {};          
			\node[dot, fill=orange!80] at (0.52*\axislen, 0) {};       
			\node[dot, fill=green!60!black] at (0.44*\axislen, 0) {};  
			\node[dot, fill=purple!60] at (0.36*\axislen, 0) {};       
			
			\node[highlight, fill=blue!70] at (0.94*\axislen, 0) {};    
			\node[highlight, fill=orange!80] at (0.88*\axislen, 0) {};  
			\node[highlight, fill=blue!70] at (0.82*\axislen, 0) {};    
			
			\draw[brace, thick] (0.95*\axislen, 0.2) -- (0.81*\axislen, 0.2);
			\node[font=\scriptsize, anchor=south] at (0.88*\axislen, 0.35) {top-3};
			
			\node[font=\scriptsize, text width=5.5cm, anchor=north west] at (0, -0.45) {%
				Searches all $X {\to} A$. No $\theta$, no fixed RHS. Returns the 3 globally strongest AFDs.};
		\end{scope}
	\end{tikzpicture}
	
	\vspace{2pt}
	\par\noindent
	\centering
	{\scriptsize \textit{RHS}:\;\;
		\tikz[baseline=-0.5ex]\draw[fill=blue!70] (0,0) circle (1.5pt);\ $E$\quad
		\tikz[baseline=-0.5ex]\draw[fill=orange!80] (0,0) circle (1.5pt);\ $D$\quad
		\tikz[baseline=-0.5ex]\draw[fill=green!60!black] (0,0) circle (1.5pt);\ $C$\quad
		\tikz[baseline=-0.5ex]\draw[fill=red!60] (0,0) circle (1.5pt);\ $A$\quad
		\tikz[baseline=-0.5ex]\draw[fill=purple!60] (0,0) circle (1.5pt);\ $B$\quad
		\tikz[baseline=-0.5ex]\draw[fill=black!20] (0,0) circle (1.5pt);\ ignored}

	\caption{Comparison of three AFD discovery paradigms on a relation with attributes $\{A, B, C, D, E\}$. Each dot represents a candidate AFD; colors indicate different RHS attributes (see legend). (a)~The threshold-based approach returns all AFDs above $\theta$, but output size is sensitive to $\theta$. (b)~The fixed-RHS approach restricts the search to a single target, missing high-scoring AFDs with other RHS. (c)~The global top-$k$ approach returns the $k$ strongest AFDs across all attributes without any pre-specification.}
	\label{fig:motivation}
\end{figure}

Existing approaches to AFD discovery predominantly follow a threshold-based paradigm: a user specifies a cutoff $\theta$ and all dependencies whose measure exceeds $\theta$ are returned~\cite{DBLP:journals/tkde/CaruccioDP16}. Three difficulties undermine this paradigm in practice.
\begin{itemize}
	\item \textit{Uncontrollable output size.} The number of returned dependencies depends entirely on $\theta$. A high cutoff risks missing genuine but imperfect dependencies, while a low one floods the output with spurious results~\cite{DBLP:conf/kdd/MandrosBV17}. There is no principled way to choose between these extremes.
	
	\item \textit{Data-dependent threshold selection.} No single $\theta$ works universally. The appropriate value depends on the data distribution, attribute cardinalities, and noise level, and it varies not only across datasets but across different attribute combinations within the same dataset.
	
	\item \textit{Measure sensitivity to LHS dimensionality.} Widely used measures such as $g_3$ are sensitive to the number of distinct values on the left-hand side (LHS-uniqueness)~\cite{DBLP:journals/vldb/ParciakWHNPV25}. As the LHS grows, even random attribute combinations tend to score high simply because fine-grained partitioning leaves few tuples per group. The same numerical threshold therefore carries different meanings at different LHS sizes, and this dimensionality bias compounds the first two difficulties.
\end{itemize}
Figure~\ref{fig:motivation} illustrates these problems and previews how the global top-$k$ approach proposed in this paper addresses them.

The top-$k$ approach avoids threshold selection by returning the $k$ highest-scoring dependencies directly. We extend this idea to a global setting where neither the LHS nor the RHS is fixed. Mandros et al.~\cite{DBLP:conf/kdd/MandrosBV17} study the fixed-RHS case, searching over subsets of the remaining attributes to find those that best determine a given target $A$. Their scoring function is a corrected fraction of information based on Shannon entropy, normalized by $H(A)$. Since the normalization by $H(A)$ depends on the marginal entropy of each RHS attribute, scores for dependencies with different right-hand sides reflect different reference points and are not directly comparable, so running their method separately for each possible target and merging the results does not yield a meaningful global ranking. A cross-RHS comparable scoring function is therefore a prerequisite for global top-$k$ discovery, a requirement that Shannon entropy-based measures cannot satisfy.

We address the problem of \emph{global} top-$k$ AFD discovery and adopt the $\mu^+$ measure as the scoring function. $\mu^+$ is grounded in logical entropy and quantifies the probability that two randomly chosen tuples agreeing on $X$ also agree on $A$; its correction factor has a closed-form expression, keeping computation practical for the larger candidate space of global discovery. A recent comparative study~\cite{DBLP:journals/vldb/ParciakWHNPV25} shows that $\mu^+$ is robust to both LHS-uniqueness and RHS-skew. Robustness to RHS-skew is especially important here, since a measure sensitive to it would systematically favor skewed attributes and flood the top-$k$ list with artifacts rather than genuine dependencies. 

Our formulation imposes no minimality constraint on the LHS, not as a simplification, but because minimality is theoretically incompatible with global top-$k$ ranking and exact-$k$ output under any non-monotonic scoring function, as we prove in Theorem~\ref{thm:triangle}.

The global top-$k$ AFD problem requires searching over all $X \to A$ combinations up to a bounded LHS size, a space that grows rapidly with the number of attributes. In the fixed-RHS setting of Mandros et al.~\cite{DBLP:conf/kdd/MandrosBV17}, branch-and-bound exploits a single optimistic estimator over all LHS subsets. The global setting has no such structure, since no single bounding function can guide pruning across different RHS attributes. We propose two algorithms under the name TALE (\textit{T}op-k \textit{A}FD discovery via \textit{L}ogical \textit{E}ntropy). TALE-Base adopts a level-wise enumeration organized by LHS size, maintaining a min-heap of the current top-$k$ results, and guarantees the exact global top-$k$ result by exhaustive evaluation. TALE-Opt reduces computation through techniques at three levels. At the search level, an Apriori-style candidate generation scheme propagates pruning decisions to descendant LHS candidates. At the computation level, LHS hash values and NULL flags are cached and reused across all RHS evaluations sharing the same LHS. At the pruning level, exact FD pruning eliminates attributes whose supersets are guaranteed exact. The central challenge is that $\mu^+$ is not monotone under LHS enlargement, which rules out classical anti-monotonic pruning for the remaining candidates. We address this by deriving an optimistic upper bound on $\mu^+$ that decreases monotonically with LHS size, recovering safe pruning despite the non-monotonicity of the measure.

This paper makes the following contributions.
\begin{itemize}
	\item The problem of \emph{global top-$k$ AFD discovery} is formalized, where neither the LHS nor the RHS is fixed. Adopting the $\mu^+$ measure makes dependencies with different RHS attributes directly comparable, giving a global ranking a well-defined meaning. To our knowledge, this problem has not been previously studied.
	
	\item A \emph{Triangle Incompatibility Theorem} shows that minimality, global top-$k$ ranking, and exact-$k$ output cannot simultaneously hold when the scoring function is not monotone under LHS enlargement, providing theoretical justification for dropping the minimality requirement.
	
	\item A baseline algorithm TALE-Base and an optimized algorithm TALE-Opt are presented. TALE-Opt reduces computation through Apriori-style candidate generation, LHS computation reuse, exact FD pruning, and an optimistic upper bound on $\mu^+$ that recovers safe pruning under non-monotonic scoring.
	
	\item Experiments on 41 real-world datasets show that TALE-Opt achieves pruning ratios up to 99.81\% and speedups over TALE-Base up to 78.81$\times$.
\end{itemize}

The rest of this paper is organized as follows. Section~\ref{sec:preliminary} introduces preliminaries. Section~\ref{sec:incompatibility} proves the Triangle Incompatibility Theorem. Section~\ref{sec:baseline} presents the baseline algorithm, which is optimized in Section~\ref{sec:optimized}. Section~\ref{sec:experiments} reports experimental results. Section~\ref{sec:related} discusses related work, and Section~\ref{sec:conclusion} concludes the paper.

\section{Preliminaries} \label{sec:preliminary}

\subsection{Basic Notions}

Let $R = \{A_1, A_2, \ldots, A_m\}$ be a relation schema of $m$ attributes, and let $r$ be a relation instance over $R$ containing $n$ tuples. For a tuple $t \in r$ and an attribute $A \in R$, $t[A]$ denotes the value of $A$ in $t$. For a subset $X \subseteq R$, $t[X]$ denotes the projection of $t$ onto $X$. We denote by $\pi_X$ the projection of $r$ onto $X$, and by $d_X = |\pi_X|$ the number of distinct $X$-values in $r$.

\begin{definition}[Functional Dependency]
	Given $X \subseteq R$ and $A \in R \setminus X$, a functional dependency (FD) $X \to A$ holds in $r$ if and only if for all $t_1, t_2 \in r$, $t_1[X] = t_2[X]$ implies $t_1[A] = t_2[A]$.
\end{definition}

An FD $X \to A$ is \emph{minimal} if no proper subset $X' \subset X$ also satisfies $X' \to A$ in $r$. When an FD does not hold exactly, a scoring function quantifies how close it is to holding. A dependency scored in this way is called an \emph{approximate functional dependency} (AFD).

\begin{definition}[Approximate Functional Dependency] \label{def:afd}
	An approximate functional dependency (AFD) over a schema $R$ is a pair $(X \to A, \sigma)$, where $X \subseteq R$, $A \in R \setminus X$, and $\sigma$ is a scoring function that maps each dependency and relation instance to a value in $[0, 1]$. For a given instance $r$, the value $\sigma(X \to A, r)$ quantifies the strength of the dependency, with $\sigma = 1$ if and only if $X \to A$ is an exact FD in $r$.
\end{definition}

Many scoring functions have been proposed for AFDs, each with different trade-offs in reliability and sensitivity~\cite{DBLP:journals/vldb/ParciakWHNPV25}. The measure adopted in this paper, $\mu^+$, is defined in Section~\ref{sec:muplus}.

\subsection{The $\mu^+$ Measure}
\label{sec:muplus}

We adopt the $\mu^+$ measure~\cite{DBLP:journals/vldb/ParciakWHNPV25} as the scoring function $\sigma$ in Definition~\ref{def:afd}. It is based on logical entropy and built on \emph{probabilistic dependence} (pdep), which measures the probability that two tuples drawn uniformly at random with the same $X$-value also agree on $A$.

\begin{definition}[Probabilistic Dependence] \label{def:pdep}
	Given $X \subseteq R$, $A \in R \setminus X$, and a relation instance $r$ of size $n$, group the tuples by their $X$-values. For each group $g$, let $|g|$ be its size and let $f_{g,a}$ be the number of tuples in $g$ with value $a$ on attribute $A$. The probabilistic dependence of $A$ on $X$ is
	\begin{equation*}
		\mathrm{pdep}(X \to A, r) = \sum_{g} \frac{|g|}{n} \sum_{a} \left( \frac{f_{g,a}}{|g|} \right)^2.
	\end{equation*}
\end{definition}

The inner sum $\sum_a (f_{g,a} / |g|)^2$ is the probability that two tuples chosen uniformly at random from group $g$ agree on $A$. The outer sum averages this over all groups, weighted by relative size. When $X \to A$ is an exact FD, each group is homogeneous in $A$ and $\mathrm{pdep} = 1$. When $X$ provides no information about $A$, pdep reduces to $\sum_a p_a^2$, where $p_a$ is the marginal frequency of value $a$.

The unconditional variant removes the conditioning on $X$:
\begin{equation*}
	\mathrm{pdep}(A, r) = \sum_{a} \left( \frac{|\{t \in r : t[A] = a\}|}{n} \right)^2
\end{equation*}
which is simply the probability that two tuples chosen at random from $r$ agree on $A$. The ratio $\mathrm{pdep}(X \to A) / \mathrm{pdep}(A)$ then measures how much conditioning on $X$ increases agreement on $A$ beyond what the marginal distribution already provides.

\begin{definition}[The $\mu^+$ Measure] \label{def:muplus}
	Given $X \subseteq R$, $A \in R \setminus X$, and a relation instance $r$ of size $n$ with $d_X < n$ and $\mathrm{pdep}(A, r) < 1$, the $\mu^+$ measure is defined as
	\begin{equation*}
		\mu^+(X \to A, r) = \max\!\Big(0,\; 1 - \rho(X \to A, r)\Big)
	\end{equation*}
	where
	\begin{equation*}
		\rho(X \to A, r) = \frac{1 - \mathrm{pdep}(X \to A, r)}{1 - \mathrm{pdep}(A, r)} \cdot \frac{n - 1}{n - d_X}.
	\end{equation*}
	When $d_X = n$ or $\mathrm{pdep}(A, r) = 1$, $X \to A$ is an exact FD and $\mu^+(X \to A, r) := 1$.
\end{definition}

The ratio $(1 - \mathrm{pdep}(X \to A)) / (1 - \mathrm{pdep}(A))$ is the fraction of disagreement on $A$ that remains after conditioning on $X$. A value of zero means $X$ fully determines $A$, while a value of one means $X$ provides no benefit over the marginal. The factor $(n-1)/(n-d_X)$ corrects for a bias that arises when $d_X$ is large relative to $n$: with many small groups, even independent attributes show high pdep, and this factor penalizes that effect. Together, $\rho$ close to zero yields $\mu^+$ close to one, while $\rho \geq 1$ is truncated to $\mu^+ = 0$.

Two properties of $\mu^+$ are relevant here. First, $\mathrm{pdep}(A, r)$ normalizes away the marginal distribution of $A$, so $\mu^+$ scores are directly comparable across dependencies with different right-hand sides, a prerequisite for global top-$k$ discovery. Second, $\mu^+$ ranks among the top two performers across 14 measures in the comparative study of Parciak et al.~\cite{DBLP:journals/vldb/ParciakWHNPV25}, with strong robustness to both LHS-uniqueness and RHS-skew. The other top performer, $\mathrm{RFI}'^+$, achieves comparable ranking quality but requires permutation-based estimation and is orders of magnitude slower. Therefore, Parciak et al.\ recommend $\mu^+$ for practical use.

\subsection{Problem Definition} \label{sec:problem}

We now define the problem studied in this paper, in which neither side of the dependency is fixed.

\begin{definition}[Global Top-$k$ AFD Discovery]
	\label{def:topk}
	Given a relation instance $r$ over schema $R$ and a positive integer $k$, let $\mathcal{F}$ denote the set of all AFDs over $r$ that are not exact FDs:
	\begin{equation*}
		\mathcal{F} = \{ X \to A \mid X \subseteq R,\; A \in R \setminus X,\; \mu^+(X \to A, r) < 1 \}.
	\end{equation*}
	The global top-$k$ AFD discovery problem is to find a subset $\mathcal{F}_k \subseteq \mathcal{F}$ with $|\mathcal{F}_k| = \min(k, |\mathcal{F}|)$ such that for all $f \in \mathcal{F}_k$ and $f' \in \mathcal{F} \setminus \mathcal{F}_k$,
	\begin{equation*}
		\mu^+(f, r) \geq \mu^+(f', r).
	\end{equation*}
	When multiple AFDs share the same $\mu^+$ score at the boundary, any selection among them is acceptable.
\end{definition}

In principle, $\mathcal{F}$ includes AFDs with LHS of any size up to $m - 1$. In practice, we restrict the search to LHS of size at most $L$, a user-specified parameter. While some large-LHS dependencies may still be meaningful, they are typically harder to inspect and validate. Therefore, $L$ is used as a practical upper bound on the LHS size to control the search space and reduce the burden of human validation. Mandros et al.~\cite{DBLP:conf/kdd/MandrosBV17} observed that the average optimal LHS size across 42 datasets is 4.0. As $|X|$ grows, $d_X$ approaches $n$, and the correction factor $(n-1)/(n-d_X)$ in $\mu^+$ heavily penalizes such candidates, making them unlikely to enter the top-$k$. For any choice of $L$, the algorithm returns the exact top-$k$ among all AFDs with $|X| \leq L$. Increasing $L$ enlarges the search space and may change the result, but the algorithm remains exact within the chosen search depth.

The candidate set $\mathcal{F}$ excludes exact FDs ($\mu^+ = 1$), since including them would fill the top-$k$ list with dependencies that carry no information about data quality issues, leaving genuinely approximate dependencies with no chance to appear in the result. Exact FDs are in any case efficiently handled by existing algorithms~\cite{DBLP:journals/tkde/WanHWL24,DBLP:journals/pacmmod/BleifussPBSN24}.

Our problem definition imposes no minimality constraint on the LHS. In classical FD discovery, minimality is natural because an exact FD $X \to A$ logically implies $X' \to A$ for any $X' \supset X$. For AFDs this implication does not hold: enlarging the LHS may increase or decrease $\mu^+$, so a non-minimal dependency can score genuinely differently from any of its subsets. Section~\ref{sec:incompatibility} shows that dropping minimality is not merely convenient but theoretically necessary.

\section{Incompatibility of Top-$k$ and Minimality}
\label{sec:incompatibility}

We prove that minimality, global top-$k$ ranking, and exact-$k$ output are mutually incompatible under any non-monotonic scoring function.

\subsection{Three Properties}

Consider a relation instance $r$ over schema $R$, the scoring function $\mu^+$, and a positive integer $k$, with $\mathcal{F}$ as in Definition~\ref{def:topk}. A result set $\mathcal{S} \subseteq \mathcal{F}$ should ideally satisfy three properties:
\begin{enumerate}
	\item[\textbf{(P1)}] \textit{Minimality.} Every dependency in $\mathcal{S}$ has a minimal LHS: for all $X \to A \in \mathcal{S}$, there is no $X' \subset X$ such that $X' \to A \in \mathcal{F}$.
	
	\item[\textbf{(P2)}] \textit{Global top-$k$.} The dependencies in $\mathcal{S}$ have the $k$ highest $\mu^+$ scores among all candidates: for all $f \in \mathcal{S}$ and $f' \in \mathcal{F} \setminus \mathcal{S}$, $\mu^+(f, r) \geq \mu^+(f', r)$.
	
	\item[\textbf{(P3)}] \textit{Exact-$k$ output.} The result contains exactly $k$ dependencies: $|\mathcal{S}| = k$.
\end{enumerate}
Each property is individually reasonable, but the following theorem shows they cannot hold simultaneously.

\subsection{The Incompatibility Theorem}

\begin{theorem}[Triangle Incompatibility]
	\label{thm:triangle}
	Let $\sigma$ be a scoring function such that there exist a relation instance $r$, a minimal AFD $X \to A$, and a strict superset $X' \supset X$ with $\sigma(X \to A, r) < \sigma(X' \to A, r)$. Then there exist a relation instance and a value of $k$ such that no subset $\mathcal{S} \subseteq \mathcal{F}$ simultaneously satisfies \emph{(P1)}, \emph{(P2)}, and \emph{(P3)}.
\end{theorem}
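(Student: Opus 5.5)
The plan is to use the witness instance $r$ from the hypothesis directly and to choose $k$ so that (P2) forces the superset dependency $X' \to A$ into $\mathcal{S}$, while (P1) forbids it. The key observation concerns (P1): it says no $X'' \subset X'$ has $X'' \to A \in \mathcal{F}$. Since $X \subset X'$ and $X \to A$ is an AFD in $\mathcal{F}$ (being a minimal AFD, it is not exact), the dependency $X' \to A$ can never appear in a set satisfying (P1). The same reasoning excludes any dependency whose LHS properly contains the LHS of some member of $\mathcal{F}$ with the same RHS. So I will argue entirely with $r$ and a ranking of $\mathcal{F}$ by $\sigma$.

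Let $s^\ast = \sigma(X' \to A, r)$, and let $N = |\{f \in \mathcal{F} : \sigma(f, r) > s^\ast\}|$. I will take $k = N + 1$; this is at most $|\mathcal{F}|$ because $X' \to A \in \mathcal{F}$. First I should check that $X' \to A$ is indeed in $\mathcal{F}$. It is not exact, because $\sigma(X' \to A) > \sigma(X \to A) \ge 0$ only gives positivity. If $X' \to A$ were exact, I would adjust the choice of $X'$. More robustly, I can note that the hypothesis presumably intends both to be AFDs in $\mathcal{F}$; otherwise I would pick, among supersets scoring above $\sigma(X\to A)$, one that is non-exact.

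Suppose $\mathcal{S}$ satisfies (P1)--(P3). By (P3), $|\mathcal{S}| = N+1$. By (P2), every $f' \notin \mathcal{S}$ scores at most $\min_{f \in \mathcal{S}} \sigma(f)$. Since $X' \to A \notin \mathcal{S}$ by (P1), every member of $\mathcal{S}$ must score at least $s^\ast$. There are exactly $N$ members scoring strictly above $s^\ast$, so $\mathcal{S}$ must contain some $g \neq X' \to A$ with $\sigma(g) = s^\ast$. That is a tie, which (P2) permits, so this choice of $k$ alone does not give a contradiction.

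The main obstacle is therefore ties at score $s^\ast$. I will handle them by choosing $k = N$ plus the number of elements tied at $s^\ast$, which means all of the score-$s^\ast$ elements must be included. If all candidates at score $\ge s^\ast$ are needed to reach $k$, then $X' \to A$ is forced into $\mathcal{S}$, since any $\mathcal{S}$ of that size with all members scoring $\ge s^\ast$ must equal that whole set. This contradicts (P1). If $k$ instead reaches into lower scores, the exclusion of $X'\to A$ violates (P2) strictly. Concretely, set $k = |\{f \in \mathcal{F} : \sigma(f, r) \ge s^\ast\}|$. Then any $\mathcal{S}$ satisfying (P2) and (P3) must be exactly this set: if it omitted one element of score $\ge s^\ast$, it would have to include some element of score $< s^\ast$, contradicting (P2). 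So $\mathcal{S}$ contains $X' \to A$, contradicting (P1). The remaining care point is verifying $X'\to A\in\mathcal{F}$ (non-exactness) and $A \notin X'$, both of which I take from the hypothesis framing.
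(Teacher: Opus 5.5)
Your argument is correct, but it takes a different route from the paper. The paper extends $r$ with fresh attributes $Y, B$ that carry a minimal AFD $f_3 = Y \to B$ with $0 < s_3 < s_1$. It then fixes $k = 2$ and checks the three pairs of properties over the pool $\{f_1, f_2, f_3\}$.

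You keep $r$ unchanged and set $k = |T|$ with $T = \{f \in \mathcal{F} : \sigma(f, r) \geq \sigma(X' \to A, r)\}$. An exchange argument shows that (P2) and (P3) force $\mathcal{S} = T$. So $X' \to A \in \mathcal{S}$, which (P1) forbids because $X \to A \in \mathcal{F}$ and $X \subset X'$.

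Your version reasons about all of $\mathcal{F}$ and handles ties explicitly. The paper's proof implicitly treats $\{f_1, f_2, f_3\}$ as if it were the whole of $\mathcal{F}$. In the extended instance, dependencies already present in $r$, or new ones involving $Y$ and $B$, may outscore or tie with $f_1$ or $f_2$. Its claim that the global top-2 is $\{f_2, f_1\}$ therefore needs further justification, for example by building the instance so that these really are the top two.

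Once that is supplied, the paper's route shows the conflict at a fixed small $k$, which is the more vivid illustration. Yours gives that up for an instance-dependent and possibly large $k$, but it is airtight and needs no construction.

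Two side remarks. First, your fallback for the case where $X' \to A$ is exact (choosing another non-exact superset scoring above $\sigma(X \to A)$) need not succeed. The paper tacitly assumes $X' \to A \in \mathcal{F}$ as well. Second, your argument uses $\sigma(X \to A) < \sigma(X' \to A)$ only to see that $X \to A$ is non-exact. So, with $k$ free, any instance containing two non-exact dependencies $Z \to A$ and $Z' \to A$ with $Z \subset Z'$ already gives the incompatibility. That closes the loose end above, and it shows that the stated conclusion does not actually rely on the non-monotonicity hypothesis.
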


\begin{proof}
	By assumption, there exist a relation instance $r$, a minimal AFD $f_1 = X \to A$ with score $s_1 = \sigma(f_1, r)$, and a strict superset $X' \supset X$ such that $f_2 = X' \to A$ has score $s_2 = \sigma(f_2, r) > s_1$. Since $X \subset X'$ and $f_1$ is minimal, $f_2$ is non-minimal. We construct a counterexample with $k = 2$.
	
	Let $r$ be extended with attributes $Y$ and $B$ distinct from those in $X' \cup \{A\}$, and let
	\begin{equation*}
		f_3 = Y \to B, \quad \sigma(f_3, r) = s_3
	\end{equation*}
	be a minimal AFD independent of $f_1$ and $f_2$, with $0 < s_3 < s_1$. This is realizable because $Y$ and $B$ are disjoint from $X' \cup \{A\}$, so their distributions can be chosen independently without affecting $s_1$ or $s_2$.
	
	We show that no subset $\mathcal{S} \subseteq \{f_1, f_2, f_3\}$ of size $2$ satisfies all three properties.
	
	\smallskip
	\noindent\textbf{(P2) + (P3) $\Rightarrow$ $\neg$(P1).}
	The two highest-scoring AFDs are $f_2$ ($\sigma = s_2$) and $f_1$ ($\sigma = s_1$), so the global top-2 is $\{f_2, f_1\}$. But $f_2$ is non-minimal since $X \subset X'$, violating (P1).
	
	\smallskip
	\noindent\textbf{(P1) + (P3) $\Rightarrow$ $\neg$(P2).}
	The minimal AFDs are $f_1$ and $f_3$, so the only subset satisfying (P1) with $|\mathcal{S}| = 2$ is $\{f_1, f_3\}$. But $\sigma(f_3) = s_3 < s_2 = \sigma(f_2)$ and $f_2 \notin \mathcal{S}$, so $\mathcal{S}$ is not the global top-2, violating (P2).
	
	\smallskip
	\noindent\textbf{(P1) + (P2) $\Rightarrow$ $\neg$(P3).}
	To satisfy (P1), $\mathcal{S}$ must exclude the non-minimal $f_2$, so $\mathcal{S} \subseteq \{f_1, f_3\}$. The only subset of size $k = 2$ is $\{f_1, f_3\}$, which is not the global top-2 since $f_2$ has a higher score than $f_3$. No subset of size $k$ can therefore satisfy both (P1) and (P2).
	
	\smallskip
	In all three cases, at least one property is violated, completing the proof.
\end{proof}

\begin{figure}[t]
	\centering
	\begin{tikzpicture}[
		vertex/.style={draw, rounded corners=3pt, minimum width=1.8cm, minimum height=0.7cm, font=\small, align=center},
		edge label/.style={font=\scriptsize, fill=white, inner sep=2pt}
		]
		\node[vertex] (P1) at (0, 3.2) {(P1)\\Minimality};
		\node[vertex] (P2) at (-2.5, 0) {(P2)\\Global top-$k$};
		\node[vertex] (P3) at (2.5, 0) {(P3)\\Exact-$k$ output};
		
		\draw[thick] (P2) -- (P3) node[edge label, midway, below] {$\neg$(P1)};
		\draw[thick] (P1) -- (P3) node[edge label, midway, right] {$\neg$(P2)};
		\draw[thick] (P1) -- (P2) node[edge label, midway, left] {$\neg$(P3)};
		
	\end{tikzpicture}
	\caption{The Triangle Incompatibility (Theorem~\ref{thm:triangle}). Each edge indicates that enforcing the two connected properties implies the violation of the third.}
	\label{fig:triangle}
\end{figure}

\subsection{Discussion}
Figure~\ref{fig:triangle} illustrates the structure of this incompatibility. The incompatibility is inherent in the problem definition, not a limitation of any particular algorithm: regardless of whether minimal AFDs are filtered before, after, or during ranking, the three properties cannot hold simultaneously.

The underlying reason is the absence of monotonicity. For exact FDs, $X \to A$ implies $X' \to A$ for any $X' \supset X$, so the larger dependency is redundant and minimality loses no information. For AFDs under $\mu^+$, no such implication exists: a non-minimal dependency $X' \to A$ can score strictly higher than any of its minimal subsets, so enforcing minimality may discard the globally highest-scoring dependency.

On the Adult dataset, $\{\mathtt{fnlwgt}\} \to \mathtt{sex}$ has $\mu^+ = 0.898$, while its strict superset $\{\mathtt{age}, \mathtt{fnlwgt}\} \to \mathtt{sex}$ achieves $\mu^+ = 0.986$, and $\{\mathtt{age}, \mathtt{fnlwgt}, \mathtt{relationship}\} \to \mathtt{sex}$ reaches $\mu^+ = 0.999$ (Section~\ref{sec:experiments}). The score increases strictly with LHS size, so $\mu^+$ is not monotone under LHS enlargement and the premise of Theorem~\ref{thm:triangle} holds for this measure.

We therefore enforce (P2) and (P3), returning the $k$ highest-scoring dependencies with exact output size, and drop (P1), prioritizing the strongest dependencies as measured by $\mu^+$ and aligning with the goal of identifying the most significant data quality issues.

\section{Baseline Algorithm} \label{sec:baseline}

\subsection{Overview}

We refer to our approach as \textbf{TALE} (\textit{T}op-k \textit{A}FD discovery via \textit{L}ogical Entropy). TALE-Base, the baseline variant presented in this section, evaluates all candidate AFDs exhaustively and is guaranteed to return the exact global top-$k$ result. TALE-Opt, introduced in Section~\ref{sec:optimized}, builds on this foundation with pruning and computation reuse.

TALE-Base enumerates candidates level by level, iterating over LHS sizes $\ell = 1, 2, \ldots, L$. At each level, it generates all $\binom{m}{\ell}$ attribute combinations of size $\ell$, pairs each with every attribute outside the LHS, and computes $\mu^+$ for each resulting candidate. A size-$k$ min-heap $\mathcal{H}$ maintains the current top-$k$ AFDs throughout. Each candidate requires a full scan of the relation, since the valid tuple set varies depending on which attributes contain NULL values. At the end of the enumeration, $\mathcal{H}$ contains the exact global top-$k$ result.

The following subsections describe candidate enumeration (Section~\ref{sec:enum}), top-$k$ heap maintenance (Section~\ref{sec:heap}), and complexity analysis (Section~\ref{sec:complexity}). Pseudocode of TALE-Base is given in Algorithm~\ref{alg:baseline}.

\begin{algorithm}[t]
	\caption{TALE-Base}
	\label{alg:baseline}
	\footnotesize
	\KwIn{Relation $r$ over schema $R = \{A_1, \ldots, A_m\}$, integer $k$, max LHS size $L$}
	\KwOut{Top-$k$ AFDs ranked by $\mu^+$ in descending order}
	
	$\mathcal{H} \leftarrow$ empty min-heap of capacity $k$, keyed on $\mu^+$\;
	\For{$\ell = 1$ \KwTo $L$}{
		\For{each $X \subseteq R$ with $|X| = \ell$}{
			\For{each $A \in R \setminus X$}{
				$r' \leftarrow \{t \in r \mid t[B] \neq \texttt{NULL}\ \text{for all}\ B \in X \cup \{A\}\}$\;
				\If{$|r'| \leq 1$}{\textbf{continue}\;}
				$s \leftarrow \mu^+(X \to A, r')$\tcp*{by Definition~\ref{def:muplus}}
				\If{$s = 1$}{\textbf{continue}\tcp*{exact FD}}
				\If{$|\mathcal{H}| < k$ \textbf{or} $s > \mathcal{H}.\mathrm{root}.\mathrm{score}$}{
					\If{$|\mathcal{H}| = k$}{remove root of $\mathcal{H}$\;}
					insert $(X \to A,\ s)$ into $\mathcal{H}$\;
				}
			}
		}
	}
	\Return entries of $\mathcal{H}$ sorted by $\mu^+$ descending\;
\end{algorithm}

\subsection{Candidate Enumeration}
\label{sec:enum}

TALE-Base enumerates all dependencies $X \to A$ with $X \subseteq R$, $A \in R \setminus X$, and $|X| \leq L$, filtering out exact FDs ($\mu^+ = 1$) after scoring. At level $\ell$, it generates all $\binom{m}{\ell}$ attribute subsets of size $\ell$ as candidate LHSs, pairing each with every attribute in $R \setminus X$ to yield $\binom{m}{\ell} \times (m - \ell)$ candidates at that level. The total number of candidates across all levels is
\begin{equation*}
	\sum_{\ell=1}^{L} \binom{m}{\ell} \cdot (m - \ell).
\end{equation*}

LHS combinations are generated in lexicographic order. Each $X$ is processed immediately: all $(m - \ell)$ candidates sharing that LHS are evaluated and passed to $\mathcal{H}$, after which $X$ is discarded. Only the current combination and $\mathcal{H}$ are kept in memory at any time.

For each candidate $X \to A$, TALE-Base makes a single pass over the relation, filtering tuples with NULL values in $X \cup \{A\}$, building the LHS-to-RHS frequency distribution, and computing pdep and $\mu^+$. Any candidate with $\mu^+ = 1$ is an exact FD and is excluded.

\subsection{Top-$k$ Heap Maintenance}
\label{sec:heap}

TALE-Base maintains a min-heap $\mathcal{H}$ of capacity $k$, keyed on $\mu^+$, whose root holds the current minimum score among the top-$k$ candidates. Two operations are performed on $\mathcal{H}$.

\textit{Insertion.} When a new candidate $X \to A$ is evaluated, it is inserted directly if $\mathcal{H}$ has fewer than $k$ entries. If $\mathcal{H}$ is full and the candidate's $\mu^+$ exceeds the root score, the root is replaced. Otherwise the candidate is discarded. Each insertion takes $O(\log k)$ time.

\textit{Threshold query.} The root score serves as a dynamic threshold $\tau$: any candidate with $\mu^+ \leq \tau$ cannot enter the top-$k$. In TALE-Base every candidate is evaluated before this comparison, so $\tau$ only saves the heap insertion cost. Its more important role is in TALE-Opt (Section~\ref{sec:optimized}), where it enables pruning entire branches without computing $\mu^+$.

At the end of the enumeration, $\mathcal{H}$ contains $\min(k, |\mathcal{F}|)$ AFDs. These are extracted in descending order of $\mu^+$ to produce the final result.

\subsection{Complexity Analysis}
\label{sec:complexity}

Let $m$ denote the number of attributes, $n$ the number of tuples, $L$ the maximum LHS size, and $k$ the desired output size.

\textit{Time complexity.} The total number of candidates is $C = \sum_{\ell=1}^{L} \binom{m}{\ell} \cdot (m - \ell)$. Computing $\mu^+$ for each candidate requires a single scan over the relation to filter NULL values and collect frequency statistics, costing $O(n)$. Each candidate also incurs an $O(\log k)$ heap operation. The overall time complexity is $O(C \cdot (n + \log k))$, which simplifies to $O(C \cdot n)$ in typical settings where $n \gg \log k$.

\textit{Space complexity.} Beyond the input relation, TALE-Base uses $O(n)$ working space for the per-candidate frequency maps (allocated and discarded per candidate, so only one set exists at any time), $O(k)$ for the min-heap, and $O(L)$ for the current LHS combination. The additional space complexity is therefore $O(n + k)$.

\section{Optimized Algorithm} \label{sec:optimized}
\subsection{Overview}

TALE-Opt reduces the exhaustive computation of TALE-Base through techniques at three levels.

\textit{The search level.} The independent combinatorial enumeration is replaced by an Apriori-style candidate generation scheme~\cite{DBLP:conf/vldb/AgrawalS94}. Candidates at level $\ell+1$ are produced by joining pairs from level $\ell$ that share a common prefix, so pruning decisions made at level $\ell$ reduce the candidates evaluated at deeper levels.

\textit{The computation level.} For a fixed $X$, the LHS hash values and NULL flags are identical across all candidates $X \to A$ and are computed once, eliminating $(m - \ell - 1)$ redundant passes over the LHS.

\textit{The pruning level.} Two complementary rules remove RHS attributes from consideration. Exact FD pruning exploits monotonicity: if $X \to A$ is exact ($\mu^+ = 1$), then $X' \to A$ is also exact for any $X' \supset X$, so all such descendants can be removed for RHS $A$. Upper bound pruning targets non-exact descendants via an optimistic upper bound on $\mu^+$ that decreases monotonically as $d_X$ grows. When the bound falls below the current heap threshold $\tau$, entire branches are pruned. The bound is strict under the no-NULL assumption; a heuristic extension for datasets with missing values is discussed in Section~\ref{sec:upperbound}.

\begin{algorithm}[t]
	\caption{TALE-Opt}
	\label{alg:optimized}
	\footnotesize
	\KwIn{Relation $r$ over schema $R = \{A_1, \ldots, A_m\}$, integer $k$, max LHS size $L$}
	\KwOut{Top-$k$ AFDs ranked by $\mu^+$ in descending order}
	
	$\mathcal{H} \leftarrow$ empty min-heap of capacity $k$, keyed on $\mu^+$\;
	$\tau \leftarrow 0$\;
	$\mathcal{L}_1 \leftarrow \bigl\{\{A_i\} : A_i \in R\bigr\}$\;
	\For{each $X \in \mathcal{L}_1$}{
		$S(X) \leftarrow R \setminus X$\;
	}
	\For{$\ell = 1$ \KwTo $L$}{
		\For{each $X \in \mathcal{L}_\ell$}{
			compute and cache $\textit{lhsHash}[i]$, $\textit{lhsValid}[i]$ for all tuples\;
			\For{each $A \in S(X)$}{
				compute $s \leftarrow \mu^+(X \to A, r)$ using cached LHS arrays\;
				\eIf{$s = 1$}{
					remove $A$ from $S(X)$\tcp*{FD pruning}
				}{
					\If{$s > \tau$}{
						insert or update $\mathcal{H}$ with $(X \to A, s)$\;
						$\tau \leftarrow \mathcal{H}.\mathrm{root}.\mathrm{score}$\;
					}
					\If{$\mu^+_{\mathrm{opt}}(X \to A, r) \leq \tau$}{
						remove $A$ from $S(X)$\tcp*{UB pruning}
					}
				}
			}
		}
		\If{$\ell < L$}{
			$\mathcal{L}_{\ell+1} \leftarrow \emptyset$\;
			\For{each ordered pair $X_1, X_2 \in \mathcal{L}_\ell$ sharing $(\ell{-}1)$-prefix}{
				$X' \leftarrow X_1 \cup X_2$\;
				$S(X') \leftarrow S(X_1) \cap S(X_2)$\;
				\If{$S(X') \neq \emptyset$}{
					add $X'$ to $\mathcal{L}_{\ell+1}$\;
				}
			}
		}
	}
	\Return entries of $\mathcal{H}$ sorted by $\mu^+$ descending\;
\end{algorithm}

Algorithm~\ref{alg:optimized} gives the pseudocode for the complete optimized algorithm TALE-Opt. The following subsections describe each technique in detail.

\subsection{Apriori-style Candidate Generation}
\label{sec:apriori}
In TALE-Base, each level $\ell$ generates candidates independently by enumerating all $\binom{m}{\ell}$ subsets of $R$ of size $\ell$, so a pruned candidate at level $\ell$ still has its supersets generated at level $\ell+1$. TALE-Opt replaces this with an Apriori-style scheme: candidates at level $\ell+1$ are derived from level $\ell$ by a join, so a candidate pruned at level $\ell$ produces no descendants.

Concretely, let $\mathcal{L}_\ell$ denote the surviving LHS candidates at level $\ell$. To generate $\mathcal{L}_{\ell+1}$, candidates in $\mathcal{L}_\ell$ are grouped by their first $\ell-1$ attributes (the \emph{prefix}); within each group, every pair whose last attributes differ is joined to produce a candidate of size $\ell+1$. For example, $\{A_1, A_3\}$ and $\{A_1, A_5\}$ in $\mathcal{L}_2$ join to produce $\{A_1, A_3, A_5\} \in \mathcal{L}_3$.

Apriori-style generation alone does not reduce the candidate space. It provides the structure through which pruning decisions propagate to descendants. The benefit appears when pruning rules remove candidates from $\mathcal{L}_\ell$, causing a cascading reduction in $\mathcal{L}_{\ell+1}$ and beyond.

\subsection{LHS Computation Reuse}
\label{sec:reuse}
For a given LHS $X$ of size $\ell$, TALE-Base evaluates $m - \ell$ candidates $X \to A$, each requiring an independent scan to compute the LHS hash and check for NULL values in $X$. Since $X$ is fixed across all these evaluations, the per-tuple LHS work is repeated unnecessarily.

TALE-Opt maintains two global arrays of size $n$: $\textit{lhsHash}[i]$, which stores the hash of the $i$-th tuple on the LHS attributes, and $\textit{lhsValid}[i]$, which records whether all LHS attributes of the $i$-th tuple are non-NULL. Both are computed once when a new $X$ is first processed and reused for all $m - \ell$ RHS evaluations. RHS-specific frequency counts are maintained separately for each $A$, but the tuple-side preprocessing on $X$ is shared.

When processing a subsequent RHS attribute $A$, the scan checks only whether $t[A]$ is NULL; the LHS hash and NULL status are read from the cached arrays, reducing the per-tuple LHS preprocessing cost from $O(\ell)$ to $O(1)$ for all but the first RHS. The asymptotic saving in LHS-side preprocessing is $O((m - \ell - 1) \cdot n \cdot \ell)$ per LHS candidate. The factor $\ell$ reflects the per-tuple cost saved by caching, while $(m - \ell - 1)$ counts the number of RHS evaluations that benefit from reuse.

\subsection{Pruning Framework and Exact FD Pruning}
\label{sec:pruning}

Each LHS candidate $X$ is associated with an \emph{RHS candidate set} $S(X) \subseteq R \setminus X$, tracking which RHS attributes still need evaluation for $X$ and its descendants. $S(X)$ is implemented as a bit vector of $m$ bits, where bit $i$ is set if and only if $A_i$ is still a candidate RHS. When a pruning rule determines that $A$ is unnecessary, it removes $A$ from $S(X)$.

\textit{Initialization.} At level~1, each single-attribute candidate $X = \{A_i\}$ is initialized with $S(X) = R \setminus \{A_i\}$.

\textit{Exact FD pruning.} After evaluating all RHS attributes in $S(X)$ for a given LHS $X$, every attribute $A$ for which $X \to A$ is exact ($\mu^+ = 1$) is removed from $S(X)$.

Exact FD satisfaction is closed under LHS enlargement: if $X \to A$ holds exactly, then $X' \to A$ holds exactly for any $X' \supset X$. Since exact FDs are excluded from the top-$k$ result, every descendant $X' \supset X$ with RHS $A$ would also be exact and thus excluded. Removing $A$ from $S(X)$ is therefore safe.

\textit{Propagation through Apriori join.}
When $X_1, X_2 \in \mathcal{L}_\ell$ are joined to produce $X' = X_1 \cup X_2 \in \mathcal{L}_{\ell+1}$, the RHS candidate set of $X'$ is initialized as
\begin{equation*}
	S(X') = S(X_1) \cap S(X_2).
\end{equation*}
If $A$ has been removed from $S(X_1)$ by exact FD pruning or by the upper-bound rule of Section~\ref{sec:upperbound}, then $X' \to A$ need not be evaluated: since $X' \supset X_1$, if $X_1 \to A$ is exact then $X' \to A$ is also exact, and if the upper bound for $X_1$ falls below $\tau$ then so does the upper bound for $X'$. The same applies to $S(X_2)$. If $S(X') = \emptyset$, $X'$ is discarded without being added to $\mathcal{L}_{\ell+1}$.

Intersection and emptiness testing both cost $O(\lceil m/w \rceil)$ where $w$ is the word size, and the cost can be reduced to $O(1)$ for schemas where the bit vector fits in one machine word.

\subsection{Optimistic Upper Bound Pruning}
\label{sec:upperbound}
Exact FD pruning targets attributes with $\mu^+ = 1$. A second rule targets attributes for which no descendant can score high enough to enter the top-$k$. It also operates by removing attributes from $S(X)$, and the propagation mechanism of Section~\ref{sec:pruning} applies without modification.

We first derive a strict upper bound under the no-NULL assumption, then discuss a heuristic extension for datasets with missing values.

\textit{Derivation under no-NULL assumption.} When there are no NULL values, every candidate operates on the full relation of $n$ tuples, and $\mathrm{pdep}(A, r)$ is a constant for each RHS attribute $A$. From Definition~\ref{def:muplus},
\begin{align*}
	\mu^+(X \to A, r) &= 1 - \rho(X \to A, r), \\
	\rho(X \to A, r) &= \frac{1 - \mathrm{pdep}(X \to A, r)}{1 - \mathrm{pdep}(A, r)} \cdot \frac{n-1}{n-d_X}.
\end{align*}
To derive an upper bound on $\mu^+$, we seek a lower bound on $\rho$.

Consider any descendant $X' \supset X$ such that $X' \to A$ is not an exact FD. Since $X \subset X'$, each distinct $X'$-value determines a unique $X$-value, so $d_{X'} \geq d_X$. It follows that $(n-1)/(n-d_{X'}) \geq (n-1)/(n-d_X)$, so the second factor of $\rho$ does not decrease.

For the first factor, $X' \to A$ being non-exact means that there exist at least two tuples $t_1, t_2 \in r$ with $t_1[X'] = t_2[X']$ but $t_1[A] \neq t_2[A]$. The scenario that minimizes $\rho$ (and thus maximizes $\mu^+$) is when the dependency is as close to exact as possible. This occurs when $X' \to A$ is violated by exactly one pair of tuples that share the same $X'$-value but disagree on $A$, while all other $X'$-values have a unique corresponding $A$-value. By Definition~\ref{def:pdep}, the contribution of this exception to $\mathrm{pdep}(X' \to A, r)$ drops by exactly $1/n$ compared to the fully exact case, giving $\mathrm{pdep}(X' \to A, r) = 1 - 1/n$ and $1 - \mathrm{pdep}(X' \to A, r) = 1/n$.

Substituting into $\rho$ gives the minimum value of $\rho$ over all non-exact descendants with $d_{X'}$ distinct LHS values,
\begin{align*}
	\rho_{\min}(d_{X'}) &= \frac{1}{n \cdot (1 - \mathrm{pdep}(A, r))} \cdot \frac{n-1}{n - d_{X'}} \\
	&= \frac{n-1}{n \cdot (1-\mathrm{pdep}(A, r)) \cdot (n-d_{X'})}.
\end{align*}
Since $d_{X'} \geq d_X$, we have $\rho_{\min}(d_{X'}) \geq \rho_{\min}(d_X)$. This leads to the following bound.

\begin{theorem}[Optimistic Upper Bound]
	\label{thm:upperbound}
	Assume the relation contains no NULL values, $d_X < n$, and $\mathrm{pdep}(A, r) < 1$. For any $X' \supset X$ such that $X' \to A$ is not an exact FD,
	\begin{align*}
		\mu^+(X' \to A, r) &\leq \mu^+_{\mathrm{opt}}(X \to A, r) \\
		&= 1 - \frac{n-1}{n \cdot (1-\mathrm{pdep}(A, r)) \cdot (n-d_X)}.
	\end{align*}
	Moreover, $\mu^+_{\mathrm{opt}}$ is monotonically decreasing in $d_X$.
\end{theorem}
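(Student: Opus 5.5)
The plan is to lower-bound $\rho(X' \to A, r)$ factor by factor for an arbitrary non-exact descendant $X' \supset X$, and then pass to $\mu^+$. Since there are no NULL values, every candidate is evaluated on the full relation $r$. Hence $n$ and $\mathrm{pdep}(A, r)$ are the same constants for $X \to A$ and for $X' \to A$. I will verify three facts in order:
\begin{enumerate}
\item $d_X \le d_{X'} < n$;
\item $1 - \mathrm{pdep}(X' \to A, r) \ge 1/n$ for every non-exact $X' \to A$;
\item the resulting bound is monotone in $d_X$.
\end{enumerate}

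For the first fact, each $X'$-value projects onto a unique $X$-value, so $d_{X'} \ge d_X$. Non-exactness gives two tuples with equal $X'$-values, so some $X'$-group has size at least $2$ and $d_{X'} < n$. Because $d_X \le d_{X'} < n$, the factor $(n-1)/(n-d_{X'})$ is well defined, positive, and at least $(n-1)/(n-d_X)$.

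The second fact is the step I expect to be the real obstacle. The text above only exhibits the extremal configuration (one violating pair), and does not show that every non-exact instance is at least as bad. I will rewrite
\begin{equation*}
1 - \mathrm{pdep}(X' \to A, r) = \frac{1}{n}\sum_{g} \frac{|g|^2 - \sum_a f_{g,a}^2}{|g|}.
\end{equation*}
Every summand is nonnegative. For a group $g$ that is not homogeneous in $A$, I will show $|g|^2 - \sum_a f_{g,a}^2 \ge 2(|g|-1)$. The argument is that $\sum_a f_{g,a}^2$, subject to $\sum_a f_{g,a} = |g|$ with at least two positive parts, is maximized by the split $(|g|-1, 1)$; this follows from convexity, or from the identity $|g|^2 - \sum_a f_{g,a}^2 = 2\sum_{a<b} f_{g,a} f_{g,b}$. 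Since $|g| \ge 2$, the summand is at least $2(|g|-1)/|g| \ge 1$. Non-exactness supplies at least one such group, which gives the bound $1/n$, with equality exactly in the one-pair case.

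Combining the two facts, and using $1 - \mathrm{pdep}(A, r) > 0$, I obtain
\begin{equation*}
\rho(X' \to A, r) \ge \frac{n-1}{n(1-\mathrm{pdep}(A, r))(n-d_{X'})} \ge \frac{n-1}{n(1-\mathrm{pdep}(A, r))(n-d_X)}.
\end{equation*}
This yields $1-\rho(X' \to A, r) \le \mu^+_{\mathrm{opt}}(X \to A, r)$. One subtlety comes from the truncation $\max(0,\cdot)$ in Definition~\ref{def:muplus}. If $\mu^+_{\mathrm{opt}}$ is negative, what strictly follows is $\mu^+(X' \to A, r) \le \max(0, \mu^+_{\mathrm{opt}}(X \to A, r))$. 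I will state the bound in that form, or equivalently note that $\mu^+ = 0$ in that case. Either way, pruning with $\tau \ge 0$ remains safe.

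Finally, for the third fact, the subtracted term $\frac{n-1}{n(1-\mathrm{pdep}(A, r))(n-d_X)}$ has a positive constant numerator and a positive denominator that decreases as $d_X$ grows on $[1, n)$. Therefore the term increases with $d_X$, and $\mu^+_{\mathrm{opt}}$ is monotonically decreasing in $d_X$. This also justifies the propagation step: the bounds of descendants are no larger than that of $X$.
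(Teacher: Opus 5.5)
Your proposal is correct and follows the same route as the paper. Like the paper, you lower-bound the first factor of $\rho(X' \to A, r)$ via $1-\mathrm{pdep}(X'\to A,r)\ge 1/n$ and the second via $d_{X'}\ge d_X$, then pass to $\mu^+$ and read off monotonicity in $d_X$; your group-wise proof of the $1/n$ bound, your check that $d_{X'}<n$, and your handling of the $\max(0,\cdot)$ truncation make rigorous steps the paper only asserts (the paper writes $\mu^+ = 1-\rho$, so its inequality literally fails when $\mu^+_{\mathrm{opt}}<0$, whereas your $\mu^+ \le \max(0,\mu^+_{\mathrm{opt}})$ always holds).
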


\begin{proof}
	For any non-exact descendant $X' \supset X$ with $d_{X'} \geq d_X$,
	\begin{align*}
		\rho(X' \to A, r) &= \frac{1 - \mathrm{pdep}(X' \to A, r)}{1 - \mathrm{pdep}(A, r)} \cdot \frac{n-1}{n - d_{X'}} \\
		&\geq \frac{1}{n \cdot (1 - \mathrm{pdep}(A, r))} \cdot \frac{n-1}{n - d_{X'}} \\
		&\geq \frac{n-1}{n \cdot (1 - \mathrm{pdep}(A, r)) \cdot (n-d_X)}
	\end{align*}
	where the first inequality uses $\mathrm{pdep}(X' \to A, r) \leq 1 - 1/n$ (the tightest bound for a non-exact FD) and the second uses $d_{X'} \geq d_X$. Therefore $\mu^+(X' \to A, r) = 1 - \rho(X' \to A, r) \leq 1 - \rho_{\min}(d_X) = \mu^+_{\mathrm{opt}}(X \to A, r)$. Monotonicity follows because increasing $d_X$ decreases $(n - d_X)$, increasing $\rho_{\min}$ and decreasing $\mu^+_{\mathrm{opt}}$.
\end{proof}

\textit{Pruning rule.} This rule applies to non-exact candidates, complementing the exact FD pruning of Section~\ref{sec:pruning}. When $\mu^+ = 1$, the candidate is handled by exact FD pruning and $\mu^+_{\mathrm{opt}}$ is not evaluated. This also avoids the degenerate cases $d_X = n$ and $\mathrm{pdep}(A, r) = 1$ where the denominator of $\mu^+_{\mathrm{opt}}$ is zero. For candidates with $\mu^+ < 1$, if $\mu^+_{\mathrm{opt}}(X \to A, r) \leq \tau$, no non-exact descendant of $X$ with RHS $A$ can enter the top-$k$, and $A$ is removed from $S(X)$. Computing $\mu^+_{\mathrm{opt}}$ requires only $d_X$ and $\mathrm{pdep}(A, r)$, both available from the evaluation of $X \to A$, so the additional cost is $O(1)$ per candidate.

Since $\mu^+_{\mathrm{opt}}$ decreases monotonically in $d_X$, upper bound pruning is compatible with the propagation mechanism of Section~\ref{sec:pruning}: if $A$ is removed from $S(X_1)$ because $\mu^+_{\mathrm{opt}}(X_1 \to A, r) \leq \tau$, then for any $X' \supset X_1$, $d_{X'} \geq d_{X_1}$ implies $\mu^+_{\mathrm{opt}}(X' \to A, r) \leq \tau$, so $S(X') = S(X_1) \cap S(X_2)$ correctly excludes $A$.

\textit{Effectiveness across levels.} The upper bound depends on the ratio $(n-1)/[n \cdot (1-\mathrm{pdep}(A, r)) \cdot (n-d_X)]$. When $d_X$ is small relative to $n$, $\mu^+_{\mathrm{opt}}$ stays close to 1 and pruning has little effect. Pruning becomes increasingly aggressive at deeper levels as $d_X$ grows. 

\textit{Extension to datasets with NULL values.} When NULL values are present, different candidates may operate on different valid tuple sets, so both $n$ and $\mathrm{pdep}(A, r)$ vary across candidates and the bound of Theorem~\ref{thm:upperbound} is no longer guaranteed to hold strictly. We apply it heuristically, using the $n$ and $\mathrm{pdep}(A, r)$ values of the current candidate. On all tested datasets, the top-$k$ results with and without this heuristic are identical, suggesting the approximation is reliable in practice. For applications requiring absolute correctness guarantees, this pruning rule can be disabled without affecting the remainder of the algorithm.

\section{Experimental Evaluation}
\label{sec:experiments}

We evaluate TALE-Base and TALE-Opt on 41 real-world datasets spanning a wide range of sizes and dimensionalities, with the goal of assessing pruning effectiveness, scalability, and the semantic quality of the discovered dependencies.

\subsection{Experimental Setup}
\label{sec:setup}

All algorithms are implemented in Java (JDK 21) and executed on a workstation with an Intel Core i9-13900 CPU (2.00 GHz, 24 cores) and 64 GB of main memory, running Windows 11. All reported times are wall-clock times of single-threaded execution. A time limit of 48 hours is imposed. The executions that do not complete within this limit are marked with $>$48h.

We compare two variants: TALE-Base (Algorithm~\ref{alg:baseline}), which exhaustively evaluates all candidates, and TALE-Opt (Algorithm~\ref{alg:optimized}), which adds Apriori-style candidate generation, LHS computation reuse, exact FD pruning, and optimistic upper bound pruning. To our knowledge, no prior algorithm addresses the global top-$k$ AFD discovery problem as formulated in Definition~\ref{def:topk}. We do not compare against Mandros et al.~\cite{DBLP:conf/kdd/MandrosBV17} because the two methods address different problems: their method fixes the RHS and returns top-$k$ LHS subsets for a single target, whereas TALE returns a global top-$k$ across all attribute combinations. Moreover, their branch-and-bound strategy relies on an optimistic estimator specific to a fixed RHS; under $\mu^+$ and the global formulation, no analogous single bounding function exists across different RHS attributes, and the upper bound $\mu^+_{\mathrm{opt}}$ in Section~\ref{sec:upperbound} was derived specifically for the level-wise global search. Unless otherwise stated, $k = 20$ and $L = 5$.

We use 41 publicly available datasets from the UCI Machine Learning Repository\footnote{\url{https://archive.ics.uci.edu/}} and Kaggle\footnote{\url{https://www.kaggle.com/}}, covering a wide range of sizes ($|r|$ from 155 to 2,075,259), dimensionalities ($|R|$ from 7 to 109), and application domains. Dataset details and experimental results are given in Table~\ref{tab:overallPerformance}.

\textit{Handling of NULL values.} NULL values arise naturally during data collection, integration, and maintenance. We adopt ignore-NULL semantics for all experiments. An AFD $X \rightarrow A$ is evaluated only on tuples with non-NULL values on all attributes in $X \cup \{A\}$, consistent with recent work on AFD discovery~\cite{DBLP:journals/vldb/ParciakWHNPV25}.

\begin{table*}[t]
	\centering
	\caption{Overall performance of TALE-Base and TALE-Opt on 41 real-world datasets
		($k = 20$, $L = 5$, time limit 48 hours).
		\#ECN: number of evaluated candidates.
		PRatio: fraction of candidates pruned by TALE-Opt relative to TALE-Base.}
	\footnotesize
	\setlength{\tabcolsep}{4pt} 
	\begin{tabular}{|c||c|c||c c|c c||c|c|}
		\hline
		Dataset & $|r|$ & $|R|$  
		& \multicolumn{2}{c|}{TALE-Base} 
		& \multicolumn{2}{c||}{TALE-Opt} 
		& Speedup & PRatio\\
		\cline{4-7}
		& & 
		& Time[s] & \#ECN
		& Time[s] & \#ECN 
		& & \\
		\hline
		Abalone & 4177 & 9 & 1.06 & 1962 & 1.03 & 1763 & 1.03 & 10.14\% \\
		Adult & 32561 & 15 & 130.28 & 52080 & 91.91 & 48200 & 1.42 & 7.45\% \\
		AI4I 2020 & 10000 & 14 & 33.35 & 33306 & 17.02 & 29645 & 1.96 & 10.99\% \\
		Air Quality & 9357 & 15 & 63.42 & 52080 & 15.03 & 15262 & 4.22 & 70.70\% \\
		Bank Marketing & 45211 & 17 & 504.03 & 117028 & 383.53 & 112971 & 1.31 & 3.47\% \\
		Appliances Energy & 19735 & 29 & 11818.69 & 3550673 & 7539.76 & 3375632 & 1.57 & 4.93\% \\
		Bike Sharing & 17379 & 17 & 246.19 & 117028 & 151.54 & 104791 & 1.62 & 10.46\% \\
		Clickstream & 165474 & 14 & 533.57 & 33306 & 197.12 & 19883 & 2.71 & 40.30\% \\
		Coupon Recommendation& 12684 & 26 & 428.63 & 1778530 & 262.77 & 1502951 & 1.63 & 15.49\% \\
		Credit Card & 30000 & 25 & 8261.13 & 1386350 & 6059.55 & 1386350 & 1.36 & 0.00\% \\
		Dry Bean & 13611 & 17 & 279.78 & 117028 & 3.55 & 1374 & 78.81 & 98.83\% \\
		Flight & 1000 & 109 & 52135.27 & 12757176117 & 1468.64 & 51878348 & 35.50 & 99.59\% \\
		Gas Turbine & 36733 & 11 & 54.97 & 7007 & 16.66 & 1976 & 3.30 & 71.80\% \\
		Hepatitis & 155 & 20 & 2.47 & 333260 & 1.55 & 214449 & 1.59 & 35.65\% \\
		Household Power & 2075259 & 9 & 843.02 & 1962 & 694.19 & 1962 & 1.21 & 0.00\% \\
		Image Segmentation & 2100 & 20 & 71.11 & 333260 & 12.25 & 64256 & 5.81 & 80.72\% \\
		Incident Management & 141712 & 36 & 122223.16 & 13830012 & 23706.01 & 2197947 & 5.16 & 84.11\% \\
		Iranian Churn & 3150 & 14 & 7.38 & 33306 & 3.43 & 17616 & 2.15 & 47.11\% \\
		Letter Recognition & 20000 & 17 & 133.14 & 117028 & 107.71 & 117028 & 1.24 & 0.00\% \\
		Magic Gamma telescope & 19020 & 11 & 23.90 & 7007 & 4.42 & 1605 & 5.41 & 77.10\% \\
		MathE & 9546 & 8 & 0.55 & 952 & 0.34 & 655 & 1.62 & 31.20\% \\
		Metro Traffic & 48204 & 9 & 10.57 & 1962 & 7.50 & 1347 & 1.41 & 31.35\% \\
		MetroPT-3 & 1516948 & 17 & 37956.08 & 117028 & 23150.57 & 110242 & 1.64 & 5.80\% \\
		Mushroom & 8124 & 23 & 118.07 & 815166 & 79.50 & 759018 & 1.49 & 6.89\% \\
		Mushroom (secondary) & 61069 & 21 & 748.97 & 455679 & 393.97 & 265863 & 1.90 & 41.66\% \\
		NHANES & 2278 & 10 & 0.99 & 3810 & 0.83 & 3456 & 1.19 & 9.29\% \\
		Nursery & 12960 & 9 & 0.42 & 1962 & 0.48 & 1962 & 0.88 & 0.00\% \\
		Obesity & 2111 & 17 & 18.84 & 117028 & 16.08 & 114624 & 1.17 & 2.05\% \\
		Occupancy & 10129 & 19 & 175.24 & 239685 & 108.71 & 223768 & 1.61 & 6.64\% \\
		OpenFDA Drug & 1000 & 39 & 339.04 & 22812426 & 38.10 & 1699083 & 8.90 & 92.55\% \\
		Parkinsons & 5875 & 22 & 442.49 & 613690 & 28.39 & 64312 & 15.58 & 89.52\% \\
		Plista & 1001 & 63 & 2591.75 & 445323060 & 179.45 & 5413737 & 14.44 & 98.78\% \\
		Poker Hand & 1000000 & 11 & 349.72 & 7007 & 293.28 & 7007 & 1.19 & 0.00\% \\
		Product Classification & 35311 & 7 & 3.21 & 434 & 1.42 & 312 & 2.26 & 28.11\% \\
		Rice & 3810 & 8 & 0.66 & 952 & 0.47 & 882 & 1.40 & 7.35\% \\
		Spambase & 4601 & 58 & 63696.08 & 267555682 & 42423.48 & 235494705 & 1.50 & 11.98\% \\
		South German Credit & 1000 & 21 & 22.53 & 455679 & 14.21 & 386063 & 1.59 & 15.28\% \\
		Steel Industry & 35040 & 11 & 34.99 & 7007 & 19.19 & 6406 & 1.82 & 8.58\% \\
		Steel Plates & 1941 & 34 & 1734.17 & 9665282 & 63.07 & 551157 & 27.50 & 94.30\% \\
		Superconductivity & 21263 & 82 & $>$48h & 2244666114 & 6492.27 & 4366203 & $>$17.7 & 99.81\% \\
		WDBC & 569 & 32 & 334.61 & 6603744 & 174.71 & 6112655 & 1.92 & 7.43\% \\				
		Wine Quality & 4898 & 12 & 5.95 & 12276 & 3.55 & 9181 & 1.68 & 25.21\% \\
		\hline
	\end{tabular}
	\label{tab:overallPerformance}
\end{table*}

\subsection{Overall Performance}
\label{sec:overall}

Table~\ref{tab:overallPerformance} reports running times and evaluated candidate counts (\#ECN) for both variants, together with the speedup and pruning ratio (PRatio) of TALE-Opt, on all 41 datasets ($k = 20$, $L = 5$). TALE-Base completes on 40 of 41 datasets within the 48-hour limit. Only Superconductivity ($|R| = 82$) exceeds it; TALE-Opt completes on all 41 datasets, finishing Superconductivity in under two hours (6,492 seconds). The per-level speedup on Superconductivity grows from 1.25$\times$ at level 1 to 3.51$\times$ at level 2 and 25.43$\times$ at level 3, consistent with the theoretical behavior of the upper bound: as $\ell$ increases, $d_X$ grows toward $n$, $\mu^+_{\mathrm{opt}}$ decreases rapidly, and pruning triggers on a rapidly growing fraction of the candidate space.

PRatio ranges from 0\% to 99.81\% across the 41 datasets, driven primarily by data structure rather than dataset size. At one extreme, Dry Bean (98.83\%), Flight (99.59\%), Plista (98.78\%), Superconductivity (99.81\%), Steel Plates (94.30\%), and Incident Management (84.11\%) all exceed 84\% pruning. In these datasets, many LHS combinations produce $d_X$ values large relative to $n$, driving $\mu^+_{\mathrm{opt}}$ below $\tau$ at shallow levels. High dimensionality relative to $|r|$ amplifies this effect: for Flight ($|R| = 109$, $|r| = 1{,}000$), Plista ($|R| = 63$, $|r| = 1{,}001$), and OpenFDA Drug ($|R| = 39$, $|r| = 1{,}000$), $d_X$ approaches $n$ after adding even a single attribute to the LHS.

Five datasets (Credit Card, Household Power, Letter Recognition, Nursery, and Poker Hand) show zero pruning. These datasets contain no exact FDs among evaluated candidates, so exact FD pruning has no effect. The top-$k$ heap fills with low-scoring entries early, keeping $\tau$ near zero throughout the search and preventing the upper bound from triggering. Household Power and Poker Hand have narrow schemas ($|R| = 9$ and $11$), further limiting the candidate space.

Most datasets fall between these extremes, with PRatio between 10\% and 85\% and speedups between 1.2$\times$ and 6$\times$. Nursery is the only dataset where TALE-Opt is marginally slower than TALE-Base (0.48s vs.\ 0.42s): the Apriori generation overhead outweighs any pruning benefit on this small dataset with no prunable candidates. On all other datasets TALE-Opt matches or outperforms TALE-Base.

\textit{Summary.} TALE-Opt achieves a positive speedup on 40 of 41 datasets. Pruning is most effective where exhaustive evaluation is most expensive: high-dimensional schemas where $d_X$ grows rapidly with LHS size, causing upper bound pruning to activate aggressively at deeper levels and reducing the evaluated candidate count by one to three orders of magnitude.

\subsection{Effect of Search Depth}
\label{sec:effect_L}

\begin{figure}
	\centering
	\renewcommand{\thesubfigure}{}
	\subfigure[(a) Execution time]{
		\includegraphics[scale = 0.63]{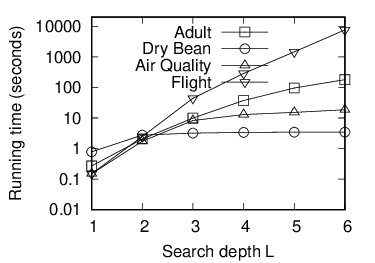}}
	\subfigure[(b) Heap threshold]{
		\includegraphics[scale = 0.63]{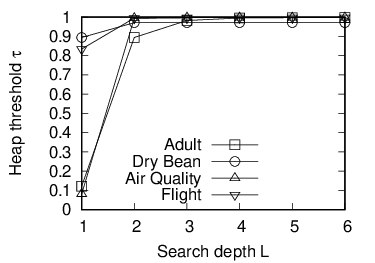}}
	\caption{Effect of search depth $L$ on running time and heap threshold ($k = 20$).}
	\label{fig:effectOfL}
\end{figure}

\begin{table}[t]
	\centering
	\caption{Top-$k$ list overlap between consecutive values of $L$ ($k = 20$).}
	\footnotesize
	\setlength{\tabcolsep}{4pt}
	\begin{tabular}{|c|c|c|c|c|}
		\hline
		$L \to L{+}1$ & Adult & Dry Bean & Air Quality & Flight \\
		\hline
		$1 \to 2$ & 5\%  & 5\%   & 0\%   & 0\%   \\
		$2 \to 3$ & 5\%  & 80\%  & 0\%   & 0\%   \\
		$3 \to 4$ & 15\% & 95\%  & 100\% & 0\%   \\
		$4 \to 5$ & 30\% & 100\% & 100\% & 100\% \\
		$5 \to 6$ & 70\% & 100\% & 100\% & 100\% \\
		\hline
	\end{tabular}
	\label{tab:overlap}
\end{table}

The parameter $L$ bounds the LHS size of discovered AFDs; a larger $L$ enlarges the search space and may improve result quality at the cost of longer running time. We evaluate TALE-Opt on four representative datasets with $L$ from 1 to 6, keeping $k = 20$ fixed. Figure~\ref{fig:effectOfL} reports execution time and heap threshold $\tau$; Table~\ref{tab:overlap} reports the overlap between consecutive top-$k$ lists.

Result quality converges well before $L = 5$ on all four datasets. On Dry Bean, $\tau$ stabilizes at $L = 2$ and the top-$k$ list reaches 100\% overlap at $L = 4$, with no further change through $L = 6$. On Air Quality, both $\tau$ and the top-$k$ list stabilize at $L = 3$, with 100\% overlap from $L = 3$ onward. On Flight, convergence occurs at $L = 4$, with 100\% overlap between $L = 4$ and $L = 5$. On Adult, $\tau$ continues to increase at every level but the increments diminish rapidly: $+0.773$ at $L = 2$, $+0.088$ at $L = 3$, $+0.013$ at $L = 4$, $+0.002$ at $L = 5$, and $+0.0005$ at $L = 6$; the overlap between $L = 5$ and $L = 6$ is 70\% and $\tau$ differs by less than 0.001, indicating near-convergence.

Running time grows with $L$ but at different rates across datasets. On Dry Bean, cumulative time at $L = 5$ is only 3.43 seconds and remains unchanged at $L = 6$, reflecting near-complete pruning at deeper levels. On Air Quality, time grows from 1.74 seconds at $L = 2$ to 15.26 seconds at $L = 5$. On Adult, time grows from 2.05 seconds at $L = 2$ to 96.13 seconds at $L = 5$ and 180.49 seconds at $L = 6$. Flight is the most expensive dataset due to its 109 attributes, with cumulative time reaching 1,469 seconds at $L = 5$ and 7,653 seconds at $L = 6$; the top-$k$ result does not change between these two levels, so the additional 6,184 seconds yield no benefit.

We adopt $L = 5$ as the default, since across all four datasets it either achieves the fully converged result or is within 0.001 of it in $\tau$, consistent with the finding of Mandros et al.~\cite{DBLP:conf/kdd/MandrosBV17} that the average optimal LHS size across 42 datasets is 4.0. Users may reduce $L$ for faster execution on wide schemas, or increase it when deeper completeness is needed.

\begin{figure}[t]
	\centering
	\renewcommand{\thesubfigure}{}	
	\subfigure[(a) Time vs. $n$ (MetroPT-3)]{
		\includegraphics[scale=0.63]{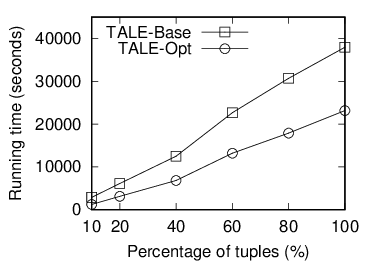}}
	\subfigure[(b) PRatio vs. $n$ (MetroPT-3)]{
		\includegraphics[scale=0.63]{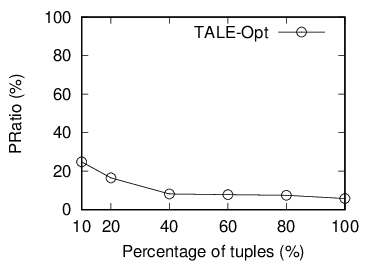}}
	\subfigure[(c) Time vs. $m$ (Spambase)]{
		\includegraphics[scale=0.63]{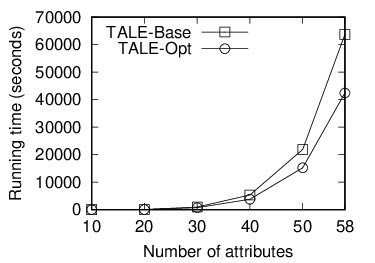}}
	\subfigure[(d) PRatio vs. $m$ (Spambase)]{
		\includegraphics[scale=0.63]{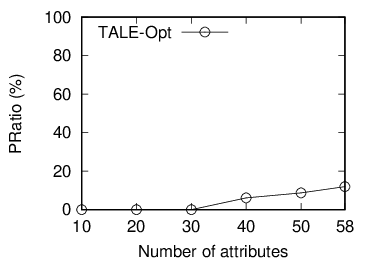}}
	\subfigure[(e) Time vs. $k$ (Adult)]{
		\includegraphics[scale=0.63]{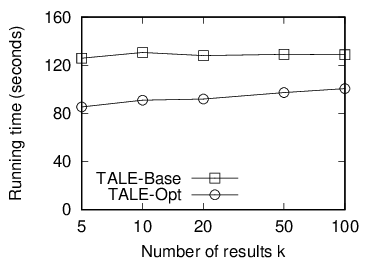}}
	\subfigure[(f) PRatio vs. $k$ (Adult)]{
		\includegraphics[scale=0.63]{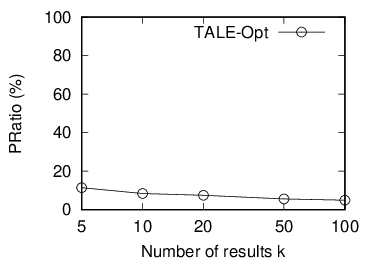}}
	\caption{Scalability of TALE-Base and TALE-Opt with respect to
		$n$, $m$, and $k$ ($k = 20$ for $n$ and $m$ experiments, $L = 5$ throughout).}
	\label{fig:scalability}
\end{figure}

\subsection{Scalability}
\label{sec:scalability}

We evaluate scalability with respect to $n$, $m$, and $k$. For $n$, we use MetroPT-3 ($|r| = 1{,}516{,}948$, $|R| = 17$), sampled at six fractions from 10\% to 100\%. For $m$, we use Spambase ($|r| = 4{,}601$, $|R| = 58$), selected for its high dimensionality. For $k$, we use Adult ($|r| = 32{,}561$, $|R| = 15$), where the effect of $k$ on pruning is clearly observable. Figure~\ref{fig:scalability} reports running time and PRatio for each experiment; $k = 20$ and $L = 5$ unless otherwise stated.

\textit{Varying $n$ (MetroPT-3).} TALE-Base grows from 2,869 seconds at 10\% to 37,956 seconds at 100\%, and TALE-Opt from 1,229 to 23,151 seconds. Both curves grow approximately linearly with $n$ (Figure~\ref{fig:scalability}(a)), as the candidate count depends only on $m$ and $L$, so total running time scales proportionally with $n$. PRatio decreases monotonically from 24.82\% to 5.80\% (Figure~\ref{fig:scalability}(b)): as $n$ grows, $d_X$ grows more slowly, keeping $(n - d_X)$ large and $\mu^+_{\mathrm{opt}}$ high, making the pruning condition harder to satisfy. The number of exact FDs also decreases with $n$ (from 8,129 at 10\% to 637 at 100\%), further reducing exact FD pruning.

\textit{Varying $m$ (Spambase).} Running time grows super-linearly with $m$ as the candidate space expands combinatorially. TALE-Base increases from 0.88 seconds at $m = 10$ to 63,696 seconds at $m = 58$, roughly $72{,}000\times$ (Figure~\ref{fig:scalability}(c)). TALE-Opt follows a similar trend but the gap widens from $m = 40$ onward, reaching approximately 21,000 seconds at $m = 58$. PRatio is zero for $m \leq 30$ (Figure~\ref{fig:scalability}(d)): Spambase contains no exact FDs at small schema sizes and the upper bound remains above $\tau$. Exact FDs appear at $m = 40$ (1,747 FDs), and PRatio rises to 11.98\% at $m = 58$ as both pruning rules become effective.

\textit{Varying $k$ (Adult).} TALE-Base is insensitive to $k$: running time stays between 125 and 131 seconds and \#ECN is fixed at 52,080 across all tested values (Figure~\ref{fig:scalability}(e)), as exhaustive evaluation covers all candidates regardless of $k$. TALE-Opt shows a mild increase from 85.39 seconds at $k = 5$ to 100.59 seconds at $k = 100$, with PRatio falling from 11.44\% to 4.90\% (Figure~\ref{fig:scalability}(f)): a larger $k$ requires a lower $\tau$, weakening the upper bound pruning condition. TALE-Opt remains faster than TALE-Base at every tested $k$, with speedups from 1.47$\times$ at $k = 5$ to 1.28$\times$ at $k = 100$.

\begin{table*}[t]
	\centering
	\caption{Effect of individual optimization components ($k = 20$, $L = 5$).
		Time is in seconds. PRatio is computed relative to the theoretical candidate count.}
	\footnotesize
	\setlength{\tabcolsep}{4pt}
	\begin{tabular}{|l||rrr|rrr|rrr|rrr|}
		\hline
		\multirow{2}{*}{Variant} 
		& \multicolumn{3}{c|}{Parkinsons} 
		& \multicolumn{3}{c|}{Air Quality} 
		& \multicolumn{3}{c|}{Adult} 
		& \multicolumn{3}{c|}{Letter Recognition} \\
		\cline{2-13}
		& Time(s) & \#ECN & PRatio
		& Time(s) & \#ECN & PRatio
		& Time(s) & \#ECN & PRatio
		& Time(s) & \#ECN & PRatio \\
		\hline
		TALE-Base  & 442.49 & 613,690 & —      & 63.42 & 52,080 & —      & 130.28 & 52,080  & —     & 133.14 & 117,028 & — \\
		TALE-A     & 435.93 & 613,690 & 0.00\% & 66.41 & 52,080 & 0.00\% & 130.65 & 52,080  & 0.00\% & 130.28 & 117,028 & 0.00\% \\
		TALE-A+R   & 225.07 & 613,690 & 0.00\% & 57.40 & 52,080 & 0.00\% & 104.71 & 52,080  & 0.00\% & 104.77 & 117,028 & 0.00\% \\
		TALE-A+R+E & 69.72  & 194,127 & 68.37\% & 44.54 & 42,124 & 19.12\% & 104.30 & 49,717 & 4.54\% & 106.53 & 117,028 & 0.00\% \\
		TALE-Opt   & 28.39  & 64,312  & 89.52\% & 15.03 & 15,262 & 70.70\% & 91.91  & 48,200 & 7.45\% & 107.71 & 117,028 & 0.00\% \\
		\hline
	\end{tabular}
	\label{tab:ablation}
\end{table*}

\subsection{Effect of Individual Optimization Components}
\label{sec:ablation}

Table~\ref{tab:ablation} reports running time, \#ECN, and PRatio for five variants on four datasets with distinct pruning characteristics: Parkinsons (PRatio 89.52\%), Air Quality (70.70\%), Adult (7.45\%), and Letter Recognition (0.00\%). Starting from TALE-Base, each variant adds one component: TALE-A adds Apriori-style candidate generation, TALE-A+R adds LHS computation reuse, TALE-A+R+E adds exact FD pruning, and TALE-Opt adds optimistic upper bound pruning.

\textit{Apriori-style generation.} TALE-A and TALE-Base evaluate the same number of candidates on all four datasets: Apriori-style generation alone does not reduce the candidate space but provides the propagation structure through which pruning decisions reach descendant LHS candidates. The running time difference is negligible.

\textit{LHS computation reuse.} TALE-A+R reduces running time substantially without changing \#ECN: on Parkinsons from 435.93 to 225.07 seconds (48.4\%); on Air Quality from 66.41 to 57.40 seconds (13.6\%); on Adult from 130.65 to 104.71 seconds (19.9\%); on Letter Recognition from 130.28 to 104.77 seconds (19.6\%). The larger saving on Parkinsons reflects its higher dimensionality ($|R| = 22$), where reusing LHS hash values across more RHS attributes yields greater benefit.

\textit{Exact FD pruning.} The effect of TALE-A+R+E depends on the number of exact FDs in each dataset. On Parkinsons (3,353 exact FDs), \#ECN drops from 613,690 to 194,127 (PRatio 68.37\%) and time from 225.07 to 69.72 seconds. On Air Quality, 18,406 exact FDs exist but their supersets cover fewer candidates, giving PRatio 19.12\% and time reduction from 57.40 to 44.54 seconds. On Adult (2,431 exact FDs), PRatio reaches only 4.54\% and time changes minimally. On Letter Recognition, no exact FDs exist and \#ECN is unchanged; the slight time increase (104.77 to 106.53 seconds) comes from BitSet maintenance overhead.

\textit{Optimistic upper bound pruning.} TALE-Opt provides the largest reduction where exact FD pruning alone leaves substantial candidates. On Parkinsons, \#ECN falls from 194,127 to 64,312, pushing PRatio to 89.52\% and time to 28.39 seconds. On Air Quality, upper bound pruning dominates: PRatio rises from 19.12\% to 70.70\% and time drops from 44.54 to 15.03 seconds (66.3\%). On Adult, PRatio reaches 7.45\% and time decreases from 104.30 to 91.91 seconds. On Letter Recognition, neither pruning rule triggers; TALE-Opt finishes in 107.71 seconds, close to TALE-A+R (104.77 seconds), where LHS reuse is the sole source of speedup.

All variants return identical top-$k$ results to TALE-Base on all four datasets, confirming that the pruning rules are safe.

\textit{Summary.} The four components target different sources of inefficiency. LHS reuse reduces per-candidate computation cost universally. Exact FD pruning is most effective when the data contains many exact FDs whose supersets can be eliminated without evaluation. Upper bound pruning dominates on datasets with strong approximate dependencies, activating when $\tau$ is high enough to cut off entire subtrees. Where neither pruning condition triggers, LHS reuse remains the primary source of speedup.

\begin{table*}[t]
	\centering
	\caption{Representative top-$k$ AFDs and their semantic significance
	across three domains ($k = 20$, $L = 5$). In WDBC, cp1 and cv3 
	abbreviate \textit{concave\_points1} and \textit{concavity3}; 
	in AI4I 2020, T\_wear, M\_fail, P\_temp, and R\_speed abbreviate 
	\textit{Tool\_wear}, \textit{Machine\_failure}, 
	\textit{Process\_temperature}, and \textit{Rotational\_speed}.}
	\scriptsize
	\setlength{\tabcolsep}{3pt}
	\begin{tabular}{|c|c|c|l|c|l|}
		\hline
		Dataset & $|r|$ & Rank & \multicolumn{1}{c|}{AFD} & $\mu^+$ &
		\multicolumn{1}{c|}{Semantic Significance} \\
		\hline
		\multirow{4}{*}{Dry Bean} & \multirow{4}{*}{13,611}
		& 4  & $(\mathit{roundness}) \to \mathit{Class}$ & 0.983
		& Single shape feature predicts bean variety \\
		\cline{3-6}
		& & 3  & $(\mathit{ConvexArea, Solidity}) \to \mathit{Class}$ & 0.983
		& Shape compactness jointly determines variety \\
		\cline{3-6}
		& & 12 & $(\mathit{Extent, Class}) \to \mathit{Area}$ & 0.972
		& Variety and fill ratio constrain bean size \\
		\cline{3-6}
		& & 14 & $(\mathit{Area, Extent}) \to \mathit{Compactness}$ & 0.971
		& Geometric consistency among morphological measurements \\
		\hline
		\multirow{4}{*}{WDBC} & \multirow{4}{*}{569}
		& 1  & $(\mathit{cp1}) \to \mathit{Diagnosis}$ & 0.763
		& Worst concave points strongly predict malignancy \\
		\cline{3-6}
		& & 4  & $(\mathit{perimeter1}) \to \mathit{Diagnosis}$ & 0.667
		& Tumor perimeter independently predicts diagnosis \\
		\cline{3-6}
		& & 15 & $(\mathit{Diagnosis, cv3}) \to \mathit{cp1}$ & 0.500
		& Diagnosis and concavity jointly constrain concave points \\
		\cline{3-6}
		& & 16 & $(\mathit{Diagnosis, cp1}) \to \mathit{cv3}$ & 0.500
		& Reverse dependency invisible to fixed-RHS methods \\
		\hline
		\multirow{4}{*}{AI4I 2020} & \multirow{4}{*}{10,000}
		& 1  & $(\mathit{T\_wear, M\_fail, TWF, HDF, PWF}) \to \mathit{OSF}$ & 0.995
		& Concurrent failures predict overstrain failure \\
		\cline{3-6}
		& & 2  & $(\mathit{Type, R\_speed, M\_fail, PWF, OSF}) \to \mathit{HDF}$ & 0.994
		& Machine type and speed predict heat dissipation failure \\
		\cline{3-6}
		& & 4  & $(\mathit{Torque, M\_fail, HDF, PWF, OSF}) \to \mathit{TWF}$ & 0.988
		& Torque with concurrent failures predicts tool wear failure \\
		\cline{3-6}
		& & 14 & $(\mathit{P\_temp, T\_wear, M\_fail, PWF}) \to \mathit{HDF}$ & 0.985
		& Temperature and tool wear indicate heat dissipation risk \\
		\hline
	\end{tabular}
	\label{tab:casestudy}
\end{table*}

\subsection{Case Study}
\label{sec:casestudy}

We examine the top-$k$ results on three datasets: Dry Bean (agricultural morphology), WDBC (medical diagnosis), and AI4I 2020 (industrial predictive maintenance). For each dataset, we select representative AFDs from the top-20 to illustrate patterns visible only through global discovery. Table~\ref{tab:casestudy} lists the selected AFDs and their semantic significance.

\textit{Dry Bean.} The top-20 results fall into two groups. Ranks 1--11 predict bean variety (\textit{Class}) from morphological shape features, all at $\mu^+ = 0.983$. Rank 4, $(\mathit{roundness}) \to \mathit{Class}$, is the most informative: a single shape descriptor nearly determines the variety, so a record whose roundness is inconsistent with its declared variety is a strong candidate for a labeling error. Rank 3 shows that \textit{ConvexArea} and \textit{Solidity} jointly achieve the same strength, characterizing how shape compactness varies across bean types. Ranks 12--20 show geometric consistency constraints, such as $(\mathit{Extent, Class}) \to \mathit{Area}$ and $(\mathit{Area, Extent}) \to \mathit{Compactness}$; violations here indicate measurement inconsistencies rather than labeling errors.

\textit{WDBC.} The Wisconsin Diagnostic Breast Cancer dataset contains 30 numerical features computed from cell nucleus images. Ranks 1--14 all predict \textit{Diagnosis} (benign or malignant) from individual or combined morphological features. The top dependency, $\mathit{concave\_points1} \to \mathit{Diagnosis}$, achieves $\mu^+ = 0.763$, identifying the worst concave points of the nucleus as the strongest single predictor. Rank 4 shows \textit{perimeter1} reaching $\mu^+ = 0.667$ independently, so tumor boundary length carries nearly comparable diagnostic signal. Ranks 15--20 shift direction: \textit{Diagnosis} now appears on the left-hand side, predicting morphological features such as \textit{concave\_points1} and \textit{concavity3}. A fixed-RHS method targeting \textit{Diagnosis} as the dependent attribute would suppress these entirely. Their appearance in the top-20 shows that diagnosis, combined with one morphological feature, strongly constrains others, relevant to consistency checking in clinical records.

\textit{AI4I 2020.} The AI4I 2020 dataset records sensor readings and failure modes for an industrial milling machine, with five binary failure indicators: TWF (tool wear failure), HDF (heat dissipation failure), PWF (power failure), OSF (overstrain failure), and RNF (random failure). The top-20 results span multiple right-hand sides: OSF appears in ranks 1 and 5--9, HDF in ranks 2--3 and 11--19, TWF in ranks 4, 10, and 13. Rank 1, $(\mathit{Tool\_wear}, \mathit{Machine\_failure}, \mathit{TWF}, \mathit{HDF}, \mathit{PWF}) \to \mathit{OSF}$ ($\mu^+ = 0.995$), shows that the co-occurrence of multiple concurrent failure conditions is a near-deterministic predictor of overstrain failure. Rank 2 shows that machine type, rotational speed, and concurrent failures jointly predict heat dissipation failure. Rank 14, $(\mathit{Process\_temp}, \mathit{Tool\_wear}, \mathit{Machine\_failure}, \mathit{PWF}) \to \mathit{HDF}$ ($\mu^+ = 0.985$), shows that process temperature combined with tool wear indicates heat dissipation risk without requiring knowledge of other failure states.

Across all three datasets, dependencies where the analyst's target attribute appears on the left-hand side, or where the most informative predictor is not the obvious candidate, appear consistently in the top-$k$ results. A fixed-RHS approach requires the analyst to commit to a target attribute before discovery begins; the global formulation removes this requirement and returns a unified ranking that can be inspected directly.

\section{Related work} \label{sec:related}

\subsection{Data Profiling}
Dependency discovery is a core task within data profiling~\cite{DBLP:journals/vldb/AbedjanGN15,DBLP:series/synthesis/2018Abedjan}, which extracts metadata and structural properties from datasets, including single-column statistics, unique column combinations, functional dependencies, and inclusion dependencies, with downstream applications in data quality assessment, schema understanding, and data integration.

Traditional profiling algorithms return complete result sets; when the number of attributes is large, the output can contain thousands of dependencies. A global top-$k$ ranking reduces this to a fixed-size result ordered by dependency strength.

\subsection{Functional Dependency Discovery}
Discovering the complete set of minimal non-trivial FDs from a relation instance has been studied extensively, and the algorithms can be grouped into three families.

Attribute-oriented algorithms traverse the attribute lattice to enumerate and validate candidates. TANE~\cite{DBLP:journals/cj/HuhtalaKPT99} performs a level-wise traversal using stripped partitions and applies inference rules for pruning; FUN~\cite{DBLP:journals/is/NovelliC01} and FD\_Mine~\cite{DBLP:journals/datamine/YaoH08} follow a similar enumeration with additional pruning conditions. DFD~\cite{DBLP:conf/cikm/AbedjanSN14} handles each RHS attribute independently through a depth-first walk.

Tuple-oriented algorithms derive dependencies from pairwise tuple comparisons. FDEP~\cite{DBLP:journals/aicom/FlachS99} and Dep-Miner~\cite{DBLP:conf/edbt/LopesPL00} compute agree sets or difference sets and extract FDs from these structures. FastFDs~\cite{DBLP:conf/dawak/WyssGR01} reduces cost by searching difference sets in a heuristic depth-first order. FSC~\cite{DBLP:journals/tkde/WanHWL24} addresses large-scale FD discovery by precomputing comparable pairs to avoid redundant tuple comparisons.

Hybrid algorithms combine elements of both families. HyFD~\cite{DBLP:conf/sigmod/PapenbrockN16} uses sampled tuple pairs to generate candidates and validates them through partition-based checks. FDHITS~\cite{DBLP:journals/pacmmod/BleifussPBSN24} formulates FD discovery as hitting-set enumeration and interleaves parallel exploration with efficient validation.

All of the above target exact FDs and produce complete result sets. Pyro~\cite{DBLP:journals/pvldb/0001N18} extends discovery to approximate FDs and UCCs under a user-specified error threshold, combining a separate-and-conquer strategy with sampling-based candidate detection, but still returns all dependencies below the threshold rather than ranking them. Our setting scores dependencies by $\mu^+$, does not fix the RHS, and returns only the top-$k$ results.

\subsection{AFD Measures}
Kivinen and Mannila~\cite{DBLP:journals/tcs/KivinenM95} defined several error measures for approximate FDs and studied their sample complexity. Among these, $g_3$ quantifies error through a partition-based formulation and was later adopted in TANE~\cite{DBLP:journals/cj/HuhtalaKPT99}, becoming widely used. Giannella and Robertson~\cite{DBLP:journals/is/GiannellaR04} proposed the $\tau$ measure based on probabilistic dependence (pdep), which captures how well the LHS determines the RHS but is sensitive to LHS-uniqueness. Mandros et al.~\cite{DBLP:conf/kdd/MandrosBV17} adopted a Shannon entropy-based measure with a bias correction derived from the permutation model.

Parciak et al.~\cite{DBLP:journals/vldb/ParciakWHNPV25} evaluated 14 AFD measures across reliability, sensitivity to LHS-uniqueness and RHS-skew, and computational cost, finding that $\mu^+$ and $\mathrm{RFI}'^+$ are the two best-performing measures. The authors recommend $\mu^+$ for practical use because $\mathrm{RFI}'^+$ is orders of magnitude more expensive to compute.

We adopt $\mu^+$ based on the findings of Parciak et al. The property that makes this choice non-trivial is that $\mu^+$ scores are comparable across different RHS attributes, unlike Shannon entropy-based measures whose normalization depends on the marginal entropy of each RHS.

\subsection{Top-$k$ AFD Discovery}
Mandros et al.~\cite{DBLP:conf/kdd/MandrosBV17} introduced the idea of returning the $k$ strongest approximate dependencies rather than all dependencies above a threshold, fixing a target attribute $A$ and searching for the $k$ subsets $X$ maximizing a corrected fraction of information $\hat{F}'(X; A)$ via branch-and-bound with an optimistic estimator. Follow-up work proved that maximizing the reliable fraction of information is NP-hard and derived a tighter bound~\cite{DBLP:conf/icdm/MandrosBV18,DBLP:journals/kais/MandrosBV20}.

Our formulation differs in that the RHS is not fixed: Mandros et al.\ search over $2^d$ LHS subsets for a single target, whereas we search all $X \to A$ combinations simultaneously. This requires a scoring function comparable across RHS attributes, which rules out the fraction of information whose normalization depends on $H(A)$, and invalidates their branch-and-bound strategy since the optimistic estimator assumes a fixed RHS. The question of whether minimality should be enforced arises naturally in the global setting; we resolve it negatively via the Triangle Incompatibility Theorem.

Wan et al.~\cite{DBLP:journals/corr/abs-2601-10130} rank exact FDs by redundancy count and develop pruning based on an upper bound on partition size; their scoring function is monotonic, so classical anti-monotonic pruning applies directly. Wei and Link~\cite{DBLP:journals/is/WeiL23} propose redundancy as a ranking criterion for meaningful FD discovery but do not integrate it into the search process for top-$k$ results. Fan et al.~\cite{DBLP:journals/pacmmod/FanHWX23,DBLP:journals/pacmmod/FanHXZ24} investigate top-$k$ discovery of general data quality rules incorporating relevance and diversity. None of these addresses the non-monotonicity of $\mu^+$, which is the central pruning challenge in our setting.

No existing method combines a global RHS-free formulation, a scoring function comparable across RHS attributes, and safe pruning under non-monotonic scoring. TALE addresses all three.

\section{Conclusion}
\label{sec:conclusion}

This paper studies global top-$k$ AFD discovery, where neither the LHS nor the RHS is fixed and the $k$ highest-scoring dependencies under $\mu^+$ are returned directly. The threshold-based paradigm that dominates existing work produces output of uncontrollable size, requires data-dependent threshold selection, and is sensitive to LHS dimensionality. The top-$k$ formulation avoids all three.

The Triangle Incompatibility Theorem shows that minimality, global top-$k$ ranking, and exact-$k$ output cannot simultaneously hold under any non-monotonic scoring function. This is an inherent property of the problem, not a limitation of any particular algorithm, and it provides a principled justification for dropping the minimality requirement.

TALE-Base guarantees the exact global top-$k$ result by evaluating all candidates level by level. TALE-Opt addresses the central challenge that $\mu^+$ is not monotone under LHS enlargement, which rules out classical anti-monotonic pruning. We derive an optimistic upper bound on $\mu^+$ that decreases monotonically with LHS size, recovering safe pruning; this bound is tracked per RHS attribute through bit-vectors and propagated across levels via Apriori-style candidate generation, complemented by LHS computation reuse and exact FD pruning. Experiments on 41 real-world datasets show pruning ratios reaching 99.81\% on high-dimensional datasets and speedups of TALE-Opt over TALE-Base up to 78.81$\times$. Case studies on three datasets confirm that the discovered dependencies are semantically interpretable and support practical data quality auditing.

\bibliographystyle{IEEEtran}
\bibliography{IEEEexample}


\vfill

\end{document}